\documentclass[11pt]{article}
\usepackage{mathpazo}
\usepackage{amssymb,amsmath,amsthm}
\usepackage{epsfig}
\usepackage{float}
\usepackage{subfigure}

\setlength{\topmargin}{-0.5in}
\setlength{\textwidth}{6.5in} 
\setlength{\textheight}{9in}
\setlength{\evensidemargin}{-.1in}
\setlength{\oddsidemargin}{-.1in}




\DeclareMathOperator{\sgn}{sign}

\def\CC{\mathbb{C}}
\def\RR{\mathbb{R}}

\newtheorem{theorem}{Theorem}[section]

\newtheorem{lemma}[theorem]{Lemma}
\newtheorem{proposition}[theorem]{Proposition}

\newtheorem{remark}[theorem]{Remark}

\newtheorem{protocol}[theorem]{Protocol}

\floatstyle{ruled}
\newfloat{protocol}{h}{loa}
\floatname{protocol}{Protocol}
\newfloat{problem}{h}{loa}
\floatname{problem}{Problem}


\newcommand*{\tr}{\mathsf{Tr}}

\DeclareMathOperator{\Ree}{Re}
\DeclareMathOperator{\Imm}{Im}

\newcommand\Exp{{\mathrm E}}
\newcommand\ip[1]{{\langle {#1} \rangle}}
\newcommand{\Vol}{\mathrm{Vol}}
\def\one{\leavevmode\hbox{\small1\normalsize\kern-.33em1}}


\newcommand{\be}{\begin{equation}}
\newcommand{\ee}{\end{equation}}
\newcommand{\bea}{\begin{eqnarray}}
\newcommand{\eea}{\end{eqnarray}}
\newcommand{\bestar}{\begin{equation*}}
\newcommand{\eestar}{\end{equation*}}
\newcommand{\beastar}{\begin{eqnarray*}}
\newcommand{\eeastar}{\end{eqnarray*}}

\newcommand{\eps}{\varepsilon}
\renewcommand{\epsilon}{\varepsilon}

\begin{document}

\title{{\bf Simulating Quantum Correlations with Finite Communication}}
\author{
 Oded Regev\thanks{
School of Computer Science, Tel Aviv University, Tel Aviv 69978, Israel. Supported
   by an Alon Fellowship, by the Binational Science Foundation, by the Israel Science Foundation,
   by the European Commission under the Integrated Project QAP funded by the IST directorate as Contract Number 015848, and by a European Research Council (ERC) Starting Grant.}
 \and
 Ben Toner\thanks{School of Physics, The University of Melbourne,
   Victoria 3010, Australia. Part of this work was completed at CWI
   (Amsterdam) and Caltech.  Supported by NSF Grants PHY-0456720 and CCF-0524828, by ARO Grant
W911NF-05-1-0294, by the Dutch BSIK/BRICKS project, by EU project QAP, and by NWO VICI project 639-023-302.}
}

\maketitle

\begin{abstract}
  Assume Alice and Bob share some bipartite $d$-dimensional quantum
  state. A well-known result in quantum mechanics says that by performing two-outcome measurements,
  Alice and Bob can produce correlations that cannot be obtained
  locally, i.e., with shared randomness alone. We show that by using only two bits of communication,
  Alice and Bob can classically simulate any such correlations. All
  previous protocols for exact simulation required the communication
  to grow to infinity with the dimension $d$.  Our protocol and
  analysis are based on a power series method, resembling Krivine's
  bound on Grothendieck's constant, and on the computation of
  volumes of spherical tetrahedra.
\end{abstract}

\section{Introduction}

\paragraph{Quantum correlations:}
Consider the following game \cite{Clauser:69a}. Alice receives a bit
$a$, and Bob receives a bit $b$, both chosen uniformly at random.
Their task is to output one bit each in such a way that the XOR of the
bits they output is equal to ${\rm AND}(a,b)$. In other words, they should
output the same bit, except when both input bits are $1$.  Notice that
no communication is allowed between them.  A moment's reflection shows
that their best strategy is to always output, say, $0$. This allows
them to win on three of the four possible inputs. It is also not
difficult to show that equipping them with a shared source of
randomness cannot help: the average success probability over the four
possible questions will always be at most $75\%$ (simply because one
can always fix the shared randomness so as to maximize the average
success probability).  This bound of $75\%$, known as the
Clauser-Horne-Shimony-Holt (CHSH) inequality, is the simplest example of a {\em
  Bell inequality} \cite{Bell:64a,Clauser:69a}.

A remarkable and well-known fact is that if Alice and Bob are allowed
to share {\em quantum entanglement} then they can win the game with
probability $\approx 85\%$, no matter which questions are asked.
Indeed, sharing entanglement allows remote parties to realize
correlations that are impossible to obtain classically, without
imparting them with the ability to communicate instantaneously.  This
distinction is one of the most peculiar aspects of quantum theory and
required many years to be properly understood~\cite{Einstein:35a,Bell:64a}.

In this paper we address the topic of quantum correlations from a communication
complexity perspective. Namely, we are asking how many bits of communication
are needed to explain the phenomenon of quantum correlations.
More precisely, we consider the following communication complexity problem,
corresponding to the quantum mechanical scenario of a shared bipartite
quantum state $\rho$ and local two-outcome measurements $\bf A$ and $\bf B$,
with the goal being to simulate the correlation (i.e., the parity)
of the measurement results.

\begin{problem}
\caption{Simulating quantum correlations}\label{prob:simulating}
\begin{tabular}{p{0.18 \textwidth}p{0.75 \textwidth}}
\textbf{Alice's input:}& A $d^2 \times d^2$ positive semidefinite matrix $\rho$ with trace 1 representing
  an operator on the space $\CC^d \otimes \CC^d$
  and a $d\times d$ Hermitian matrix $\bf A$ with $\pm 1$ eigenvalues.\\
\textbf{Bob's input:}& The (same) matrix $\rho$ and a $d\times d$ Hermitian matrix $\bf B$
  with $\pm 1$ eigenvalues.\\
\textbf{Alice's output:}& A bit $\alpha \in \{-1,1\}$. \\
\textbf{Bob's output:}& A bit $\beta \in \{-1,1\}$. \\
\textbf{Goal:}& The correlation $\Exp[\alpha \beta]$ should satisfy
$
\Exp[\alpha \beta] = \tr \left( {\bf A}\otimes{\bf B} \cdot \rho \right).
$
\end{tabular}
\end{problem}
\noindent As we discuss in the open problems paragraph, this problem
is a special case of the problem of simulating local measurements
on quantum states, in which the goal is to simulate the entire output
(as opposed to just the correlation) as well as to handle $m$-outcome
measurements for $m>2$.

In fact, this communication complexity problem can be stated
in an entirely classical and much simpler language which we shall adopt from now on.
The equivalence between the two formulations was established by Tsirelson~\cite{Tsirelson:85b},
and will be described in Appendix~\ref{apndx:classical}.

\begin{problem}
\caption{Simulating quantum correlations (classical formulation)}\label{prob:simulatingclassical}
\begin{tabular}{p{0.18 \textwidth}p{0.7 \textwidth}}
\textbf{Alice:}& Receives as input a unit vector $\vec a \in \RR^n$ and outputs a bit $\alpha \in \{-1,1\}$ \\
\textbf{Bob:}& Receives as input a unit vector $\vec b \in \RR^n$ and outputs a bit $\beta \in \{-1,1\}$ \\
\textbf{Goal:}& The correlation $\Exp[\alpha \beta]$ should satisfy
$
\Exp[\alpha \beta] = \ip{\vec a, \vec b}
$
\end{tabular}
\end{problem}

\noindent
So if $\vec a =\vec b$, Alice and Bob must always output the same
bit, whereas if $\vec a=-\vec b$, they must always output opposite
bits. If, say, $\vec a$ is orthogonal to $\vec b$, then their
outputs should be uncorrelated.

To see how the game described in the beginning of this section fits into this problem,
consider the special case in which Alice's input is either the vector $\vec a_0 = (1,0)$ or
the vector $\vec a_1 = (0,1)$ and Bob's input is either $\vec b_0 = \frac{1}{\sqrt 2}(1,1)$
or $\vec b_1 = \frac{1}{\sqrt 2}(1,-1)$. Notice that $\ip{\vec a_i, \vec b_j}$
is $-\frac{1}{\sqrt{2}}$ if $i=j=1$ and $\frac{1}{\sqrt{2}}$ otherwise.
Therefore, if we are able to simulate quantum correlations in this case,
then we can win the game with probability $\frac{1}{2}+\frac{1}{2\sqrt2}\approx 85\%$.
Using our earlier observations, it follows that even when using shared randomness,
one cannot solve Problem~\ref{prob:simulatingclassical} without any communication, i.e.,
at least one bit of communication is required.

\paragraph{Previous work:}

The problem of simulating quantum correlations was introduced
independently by several authors, including
\cite{Maudlin:92a,Steiner:00a,Brassard:99a}.  It is also closely
related to a communication complexity problem introduced by Kremer,
Nisan, and Ron \cite{KeremerNR99}.  Early work concentrated on the special case of dimension $n=3$, which
turns out to correspond to the case of a shared EPR pair. The protocol
in \cite{Brassard:99a} solves this special case with 8 bits of
communication; this was subsequently improved to just one bit \cite{Toner:03a} (see also \cite{cerf:220403}).
Up to now, the best known protocol for the general case of
Problem~\ref{prob:simulatingclassical} required $\lfloor n/2 \rfloor$
bits of one-way communication~\cite{toner:_how}.

There has also been considerable work on other variants of the question.
For instance, one might consider bounds on the {\em average} communication,
as opposed to the worst case communication as we do here.
The previous best result in this direction is by Degorre, Laplante, and Roland,
who have shown that $(\log n)/2 + O(1)$ bits of communication suffice on average
(over the shared randomness of Alice and Bob), but in their protocol the communication
in the worst case is unbounded~\cite{degorre:062314,degorre:012309}.
Another variant of the question allows for an additive error of at most $\eps$ in the correlations.
In this case there is a straightforward protocol that uses $O(\eps^2\log(1/\eps))$
bits of communication, independent of $n$~\cite{KeremerNR99}.

Our main result improves on all previous work by showing a solution
to Problem \ref{prob:simulatingclassical} using a finite amount of communication,
independent of the dimension $n$.
\begin{theorem}\label{thm:main}
There is a public-coin protocol for exactly simulating quantum correlations
using two bits of one-way communication.
\end{theorem}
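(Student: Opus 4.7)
The plan follows Krivine's template for bounding Grothendieck's constant, but upgraded from approximation to exact equality: design a base two-bit protocol whose achievable correlation is an explicit odd analytic function $g:[-1,1]\to[-1,1]$, then compose with a nonlinear embedding that effectively inverts $g$.

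For the base protocol, Alice and Bob share two i.i.d.\ standard Gaussian vectors $\vec g_1,\vec g_2\in\RR^n$. Alice sends the two sign bits $s_i=\sgn\langle\vec a,\vec g_i\rangle$ to Bob and outputs $\alpha = f_A(s_1,s_2)$; Bob computes his own signs $t_i=\sgn\langle\vec b,\vec g_i\rangle$ and outputs $\beta=f_B(s_1,s_2,t_1,t_2)$, for fixed $\{\pm 1\}$-valued functions $f_A,f_B$. By rotational symmetry $\Exp[\alpha\beta]$ depends only on $t:=\langle\vec a,\vec b\rangle$ and defines an odd analytic function $g$. Each of the sixteen joint sign probabilities $\Pr[(s_1,s_2,t_1,t_2)=(\sigma_1,\sigma_2,\tau_1,\tau_2)]$ reduces, after quotienting by the rotational stabiliser of $\{\vec a,\vec b\}$, to the volume of a spherical tetrahedron on $S^3$ whose four dihedral angles are determined by $t$; these volumes are computable in closed form via Schl\"afli's formula, yielding an explicit expansion $g(t)=\sum_{k\ge 0}c_k t^{2k+1}$ with $c_k\ge 0$ and $\sum_k c_k=g(1)=1$. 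The functions $f_A,f_B$ are chosen to make $g$ as close to the identity as possible, i.e., to maximise $g'(\pm 1)$.

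For the Krivine embedding, expand the inverse as $g^{-1}(t)=\sum_{k\ge 0}d_k t^{2k+1}$ and embed both inputs into $\bigoplus_k(\RR^n)^{\otimes(2k+1)}$ via
\begin{equation*}
\phi_A(\vec a)=\bigoplus_{k\ge 0}\sqrt{|d_k|}\,\vec a^{\otimes(2k+1)},\qquad \phi_B(\vec b)=\bigoplus_{k\ge 0}\sgn(d_k)\sqrt{|d_k|}\,\vec b^{\otimes(2k+1)}.
\end{equation*}
Then $\langle\phi_A(\vec a),\phi_B(\vec b)\rangle=g^{-1}(t)$, and both have squared norm $\sum_k|d_k|$. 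The single decisive analytic statement is $\sum_k|d_k|\le 1$; assuming this, a minor rescaling (absorbing any slack into an auxiliary pair of antipodal coordinates) makes $\phi_A(\vec a)$ and $\phi_B(\vec b)$ unit vectors in an effectively finite-dimensional Hilbert space, and Alice and Bob simply run the base protocol on $\phi_A(\vec a),\phi_B(\vec b)$ to realise correlation $g(g^{-1}(t))=t$ exactly, using two bits of one-way communication.

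The main obstacle is verifying $\sum_k|d_k|\le 1$. A zero- or one-bit protocol essentially forces $g(t)$ to be proportional to $\arcsin t$, whose inverse $\sin$ has absolute coefficient sum $\sinh(\pi/2)>1$; this is exactly why Krivine's method yields only an inequality in the Grothendieck setting. The two bits of communication provide just enough extra flexibility, quantified by the spherical-tetrahedron volume computation, to produce a $g$ whose inverse meets the absolute-summability bound. The delicate coefficient-wise control of $g^{-1}$ is therefore the technical heart of the argument.
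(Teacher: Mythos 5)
Your high-level plan — build a two-bit base protocol with correlation function $g$, invert $g$ by a Krivine-style tensor embedding, and run the base protocol on the embedded inputs — is exactly the paper's approach, and the tensor-embedding part of your argument is correct in substance. However, the base protocol you propose is too weak, and in a way that the spherical-tetrahedron framing obscures.

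In your base protocol, Alice and Bob share only two independent Gaussian directions $\vec g_1,\vec g_2$, Alice sends the two signs $s_1,s_2$, and Bob's output depends only on $(s_1,s_2,t_1,t_2)$. But the four Gaussians $\langle\vec a,\vec g_1\rangle,\langle\vec b,\vec g_1\rangle,\langle\vec a,\vec g_2\rangle,\langle\vec b,\vec g_2\rangle$ have a block-diagonal covariance: the pair $(s_1,t_1)$ is independent of $(s_2,t_2)$, and each pair has $\Pr[s_i=\tau,\,t_i=\tau']=\tfrac14\bigl(1+\tau\tau'\cdot\tfrac{2}{\pi}\arcsin t\bigr)$. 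So the sixteen sign probabilities do not involve spherical tetrahedra at all — they are products of two planar quantities. Writing $\theta=\tfrac{2}{\pi}\arcsin t$ and expanding, one finds $\Exp[\alpha\beta]=A+B\theta+C\theta^{2}$ for constants $A,B,C$ depending on $f_A,f_B$. If $g$ is to be odd (as your power-series inversion requires) and to satisfy $g(\pm1)=\pm1$, then $A=C=0$ and $B=1$, i.e.\ $g(t)=\tfrac{2}{\pi}\arcsin t$. This is precisely the zero-communication correlation function, whose inverse $\sin(\tfrac{\pi}{2}\,\cdot\,)$ you yourself observe fails the summability bound. So under the constraints you impose, your base protocol cannot do better than no communication, and the Krivine step cannot rescue it.

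The feature that makes the paper's protocol genuinely stronger is twofold. First, the orthant protocol at $k=2$ uses \emph{three} Gaussian projections, not two. Second, and more importantly, Bob does not reduce his projections to their signs: he outputs $\sgn\langle G\vec b,(1,c_1,c_2)\rangle$, i.e.\ the sign of a linear form in the \emph{continuous} values $(G\vec b)_i$. This is what produces a genuinely coupled $4$-dimensional Gaussian (the matrix $M'$ in the paper, whose off-diagonal structure does not factor), and that is where the spherical tetrahedron and Schl\"afli's formula actually enter. Restricting Bob to a Boolean function of his own sign bits throws away exactly the information needed to beat $\arcsin$.

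Two minor points. The ``absorbing slack'' rescaling is vacuous: $g(1)=1$ forces $\sum_k d_k=1$, so $\sum_k|d_k|\le 1$ already implies $d_k\ge 0$ for all $k$ and $\sum_k|d_k|=1$; allowing sign factors in $\phi_B$ buys nothing, and the condition you must establish is the same nonnegativity the paper proves in Lemma~\ref{lemma:1}. And the real technical work — which your sketch does not attempt — is the verification that the chosen base protocol's inverse power series has nonnegative coefficients; the paper does this via a delicate positivity analysis of the second derivative of $h_2^{\mathrm{ORT}}$ (Section~\ref{sec:powerseries}), which has no analogue for the $\arcsin$ function your base protocol yields.
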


We note that the shared randomness is essential: there is no exact private-coin protocol which has bounded communication in the worst case~\cite{PhysRevA.63.052305}.
We also mention that the marginal distributions produced by our protocol (as well as all our
intermediate protocols) are uniform.
This can be verified from the description of the
protocols. Alternatively, note that one can always obtain uniform
marginals without changing the joint correlation by simply taking a
shared random bit $r \in \{-1,1\}$ and asking both Alice and Bob to
multiply their outputs by $r$.

The main question left open in the preliminary version of this
work was whether the theorem is tight, i.e., whether there is a one bit protocol for the problem.
This question has recently been resolved by V{\'e}rtesi and Bene~\cite{VertesiB09}
who showed that no one bit protocol exists.
In Section \ref{sec:applowerbounds} we describe some of our own attempts
to prove such a result. Although our attempts were unsuccessful,
the approaches we describe might be of interest in the future.

\paragraph{Proof outline and techniques:}
We will start in Section~\ref{sec:some-comm-prim} by describing some basic protocols.
All of these protocols have the property that for any input $\vec a$, $\vec b$,
$\Exp[\alpha \beta] = h(\ip{\vec a, \vec b})$ for some
function $h:[-1,1] \to [-1,1]$, i.e., the correlation between
the outputs depends only on the inner product between
the input vectors. (In fact, any protocol can be transformed
into one that has this property: by using the shared randomness,
Alice and Bob can apply a random orthogonal rotation to their inputs;
then, it is not difficult to prove that the resulting distribution on
inputs depends only on the inner product between the original inputs.)

Our goal, of course, is to come up with a protocol whose `correlation
function' $h$ is simply $h(x)=x$. We therefore analyze the correlation
functions of our basic protocols. The main part of the analysis is based on the
calculation of areas of spherical triangles in four-dimensional space
(a topic that was also at the heart of Karloff and Zwick's work
on the approximation of MAX3SAT \cite{KarloffZ97}).
Unfortunately, as we will see in Section~\ref{sec:some-comm-prim}, none of our basic protocols
achieves $h(x)=x$ (see Figures \ref{fig:correlationmaj} and \ref{fig:correlationort}
for plots of some of the correlation functions relative to the
desired $h(x)=x$).

Instead, we will show in Section \ref{sec:simul-joint-corr}
that one can take a protocol whose correlation function $h$
is `strong enough' in some precise sense and transform it into
another protocol whose correlation function is the desired $h(x)=x$.
To complete the proof, we will show in Section~\ref{sec:powerseries}
that the correlation function of our `2-bit orthant protocol'
is strong enough.

The transformation shown in Section \ref{sec:simul-joint-corr} is
the heart of our construction. The idea is to carefully choose
a mapping $C$ from $\RR^n$ to another (infinite-dimensional)
Hilbert space with the property that for any vectors
$\vec a$ and $\vec b$, $\ip{C(\vec a),C(\vec b)} = f(\ip{\vec a, \vec b})$
for some function $f:[-1,1] \to [-1,1]$.
Then, in the transformed protocol, Alice and Bob simply run the
original protocol on inputs $C(\vec a)$ and $C(\vec b)$.
Clearly, this results in a protocol with correlation function
$g(x)=h(f(x))$ where $h$ is the correlation function of the
original function. In order to achieve the desired correlation
function $g(x)=x$ we need to choose $f$ to be $h^{-1}$ (assuming
it is well-defined of course). Intuitively speaking, the purpose of
$C$ is to slightly weaken the correlation function so that it matches
the desired $h(x)=x$.

The main effort, therefore, is in constructing a mapping $C$
with the property that $\ip{C(\vec a),C(\vec b)} = h^{-1}(\ip{\vec a, \vec b})$.
To demonstrate how such a thing can be achieved, assume, for simplicity,
that we have $h^{-1}(x)=x^3$. Then we can choose $C$ to be the mapping
$\vec v \mapsto \vec v \otimes \vec v \otimes \vec v$ where $\otimes$
denotes the tensor operation. It then follows from the definition
that for any vectors $\vec a$ and $\vec b$,
$$\ip{\vec a \otimes \vec a \otimes \vec a, \vec b \otimes \vec b \otimes \vec b} =
   \ip{\vec a,\vec b}^3,$$
as required. In reality, the function $h^{-1}$ will be much more
involved, and we will construct $C$ based on its power series expansion.

The idea of using a power series combined with a mapping $C$ as above
originates in Krivine's work on Grothendieck's
constant~\cite{Krivine:79a}.  More recently, Alon and
Naor~\cite{alon04:_approx} showed that Krivine's method can be
interpreted as an algorithmic rounding technique for a certain family
of semidefinite programs, and this has since been extended in a series
of papers (see, e.g.,
\cite{charikar2004mqp,AroraBHKS05,alon06:_quadr}).
As far as we know, our result is the first application of Krivine's method to
communication complexity.

\paragraph{Open problems:}
The problem we consider in this paper is a special case of the more general
problem of simulating local measurements on quantum states. Here,
as in our problem, Alice and Bob are given (the classical description of)
a bipartite quantum state $\rho$ on $\CC^d \otimes \CC^d$.
In addition, Alice is given an $m$-outcome measurement $A$ and
Bob is given an $m$-outcome measurement $B$.  The goal is for Alice and Bob to output indices
$\alpha, \beta \in [m]$ that are distributed as if they actually
performed the measurements $A$ and $B$ on $\rho$.

The complexity of this general problem is still not well understood, even for very special cases.
Note that we do {\em not} resolve the $m=2$ special case in this paper:
although our protocol gives the correct correlations, it generates uniform marginal distributions,
and not those predicted by quantum theory. The only case that is essentially resolved
is the case $d=m=2$~\cite{CerfGM00,Toner:03a}. Beyond that, no exact protocol with bounded worst-case
communication is known, even for $(d,m) = (2,3)$ or $(d,m) = (3,2)$.
Let us mention some other known results.
First, Brassard, Cleve, and Tapp have established that $\Omega(d)$ bits of communication are necessary for
exact simulation of $d$-outcome measurements on the
maximally-entangled state in $\CC^d \otimes
\CC^d$~\cite{Brassard:99a}. Massar, Bacon,
Cerf, and Cleve have shown that there is an exact private-coin protocol for this problem that
uses $O(d \log d)$ bits of communication on average~\cite{PhysRevA.63.052305}.
On the other hand, the same authors have shown
that any exact protocol with bounded worst-case communication requires
an unbounded number of public coins~\cite{PhysRevA.63.052305}. Finally, Shi and
Zhu have shown that it is possible to simulate the required distribution to
within variational distance $\eps$ using $O({m^6}/{\epsilon^2} \log ({m}/{\epsilon}))$
bits of communication~\cite{Shi05}.

\paragraph{Outline:}
In Section~\ref{sec:some-comm-prim} we present our basic protocols,
and calculate their correlation functions. None of these protocols has the
right correlation function. Then, in Section~\ref{sec:simul-joint-corr} we show
a general technique to take any protocol with a `strong enough' correlation function,
and transform into one that achieves the right correlation function $h(x)=x$.
This part is based on Krivine's power series method.
Finally, in Section~\ref{sec:powerseries} we complete the proof by showing
that one of our basic protocols, the `2-bit orthant protocol',
indeed has a strong enough correlation function.
In Subsection~\ref{ssec:slightlybetter} we show a slightly
better protocol that communicates roughly $1.82$ bits
on average.
There is a discussion of lower bounds in Section~\ref{sec:applowerbounds}.

\section{Basic Protocols}
\label{sec:some-comm-prim}

In this section we present a number of basic communication protocols, and
calculate their correlation functions. None of these protocols has the
right correlation function, but later we will show how to modify them
so that the right correlation function is obtained.

\subsection{Protocol with no communication}\label{ssec:localprim}

The following simple protocol uses no communication and is included for completeness. It is based on the `random hyperplane' idea used in~\cite{Bell:64a,grothendieck53:_resum,GoemansW95}.

\begin{protocol}
\caption{}\label{gl-prot:groth}
\begin{tabular}{p{0.24 \textwidth}p{0.7 \textwidth}}
\textbf{Random Variables:}&
Alice and Bob share a unit vector $\vec \lambda \in \RR^n$ chosen
uniformly at random from the unit sphere.\\
\textbf{Alice:}&
Alice outputs $\alpha = \sgn (\ip{\vec a,\vec \lambda})$.\\
\textbf{Bob:}&
Bob outputs $\beta = \sgn (\ip{\vec b ,\vec \lambda})$.\\
\end{tabular}
\end{protocol}

\begin{lemma}
\label{gl-lemma:grothbasic}
The output of Protocol~\ref{gl-prot:groth} satisfies
\begin{align*}
  \Exp[\alpha \beta] = \frac2\pi \arcsin( \ip{\vec a ,\vec b}).
\end{align*}
In other words, its correlation function is $h(x)=\frac2\pi \arcsin(x)$.
\end{lemma}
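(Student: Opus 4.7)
The plan is to reduce the calculation to a two-dimensional geometric probability. Let $\theta = \arccos(\langle \vec a, \vec b\rangle) \in [0,\pi]$ be the angle between the two input vectors. Since $\alpha,\beta \in \{-1,+1\}$, we have
\[
\Exp[\alpha\beta] = \Pr[\alpha=\beta] - \Pr[\alpha\neq\beta] = 1 - 2\Pr[\alpha\neq\beta],
\]
so it suffices to show $\Pr[\alpha\neq\beta] = \theta/\pi$.

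The event $\alpha \neq \beta$ is exactly the event that the random hyperplane orthogonal to $\vec\lambda$ separates $\vec a$ from $\vec b$. The key observation is that this event depends only on the projection of $\vec\lambda$ onto the two-dimensional plane $P$ spanned by $\vec a$ and $\vec b$: indeed, $\langle \vec a,\vec\lambda\rangle = \langle \vec a,\vec\lambda_P\rangle$ where $\vec\lambda_P$ is that projection, and similarly for $\vec b$. First I would argue that, by the rotational invariance of the uniform measure on the sphere, the direction of $\vec\lambda_P$ within $P$ is uniformly distributed on the unit circle in $P$ (conditioned on being nonzero, which has probability one). Thus the problem reduces to choosing a uniformly random unit vector in the plane and asking when it separates $\vec a$ and $\vec b$.

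In the plane, one can see directly that the set of unit vectors $\vec\lambda_P$ with $\sgn(\langle \vec a,\vec\lambda_P\rangle) \neq \sgn(\langle \vec b,\vec\lambda_P\rangle)$ is the union of two arcs of total angular measure $2\theta$, so the probability is $2\theta/(2\pi) = \theta/\pi$. Substituting back yields
\[
\Exp[\alpha\beta] = 1 - \tfrac{2\theta}{\pi} = \tfrac{2}{\pi}\bigl(\tfrac{\pi}{2} - \arccos(\langle\vec a,\vec b\rangle)\bigr) = \tfrac{2}{\pi}\arcsin(\langle\vec a,\vec b\rangle),
\]
using the identity $\arcsin(x)+\arccos(x)=\pi/2$.

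The only mildly delicate step is the reduction to the plane $P$ via rotational invariance, but this is standard: one can either invoke symmetry directly or parametrize $\vec\lambda$ as $\vec\lambda_P + \vec\lambda_{P^\perp}$ and integrate out the $P^\perp$ component. Measure-zero events (such as $\vec\lambda \perp \vec a$ or $\vec\lambda\perp\vec b$) can be ignored without affecting the expectation.
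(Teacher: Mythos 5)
Your proof is correct and follows essentially the same approach as the paper: project $\vec\lambda$ onto the plane spanned by $\vec a$ and $\vec b$, use rotational invariance to conclude the projected direction is uniform on the circle, compute $\Pr[\alpha\neq\beta]=\theta/\pi$ from the arc measure, and convert via $\Exp[\alpha\beta]=1-2\Pr[\alpha\neq\beta]$. The only difference is that you spell out the $2\theta$ arc computation and the $\arcsin/\arccos$ identity a bit more explicitly than the paper does.
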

\begin{proof}
Let $\vec \mu$ denote the projection of $\vec \lambda$ on the space
spanned by $\vec a$ and $\vec b$, normalized to be of norm $1$. By
symmetry, $\vec \mu$ is distributed uniformly on the unit circle in
that two-dimensional space. Therefore,
$$\Pr(\alpha \neq \beta) = \Pr(\sgn (\ip{\vec \mu, \vec a}) \neq \sgn(\ip{\vec \mu, \vec b}))
    = \frac{1}{\pi} \arccos (\ip{\vec a, \vec b}).$$
It follows that
$$
  \Exp[\alpha \beta] = 1 - 2\Pr(\alpha \neq \beta) =
  \frac{2}{\pi} \arcsin (\ip{\vec a ,\vec b}),
$$
as required.
\end{proof}

\subsection{The majority protocol}
\label{sec:majority-2k+1-bits}

We now present a natural extension of the protocol in the previous
section.  This protocol, which we call the `majority' protocol, will
not be used in the sequel; instead, we will later describe a more
efficient protocol.  We present the majority protocol because its
analysis is somewhat simpler and so it may be useful for generalizing our
technique to simulate stronger correlation functions.

The majority protocol, given as Protocol \ref{basicdp}, is parameterized by a
fixed even integer $k \ge 0$ and uses $k$ bits of one-way communication.
The idea is essentially to repeat the naive random half-space procedure
from the last section $k+1$ times, and then output bits $\alpha$, $\beta$
so that their product is equal to the majority of the $k+1$ products
$\alpha_0 \beta_0,\ldots,\alpha_k \beta_k$ of the outputs of the individual
protocols.
The naive way of implementing this would require sending $k+1$ bits from
Alice and Bob. Namely, Alice outputs $1$ and sends $\alpha_0,\ldots,\alpha_k$
to Bob who outputs $\text{MAJ}(\alpha_0 \beta_0, \alpha_1 \beta_1, \ldots, \alpha_{k} \beta_{k})$.
Instead, a simple trick allows
Protocol \ref{basicdp} to use only $k$ bits: Alice outputs $\alpha_0$ and sends $\alpha_1,\ldots,\alpha_k$
to Bob who outputs $\text{MAJ}(\beta_0, \alpha_0 \alpha_1 \beta_1, \ldots, \alpha_0 \alpha_{k} \beta_{k})$.

As $k$ grows, the correlation function produced by the protocol becomes stronger, as shown
in Figure \ref{fig:correlationmaj}.
It turns out that $k=4$ bits are sufficient to be able to simulate quantum correlations. The proof
of this fact is omitted since we will instead use the more efficient `orthant'
protocol described in Subsection \ref{sec:maximal-k-bit} below.

\begin{figure}[ht]
\center{\epsfxsize=4in\epsfbox{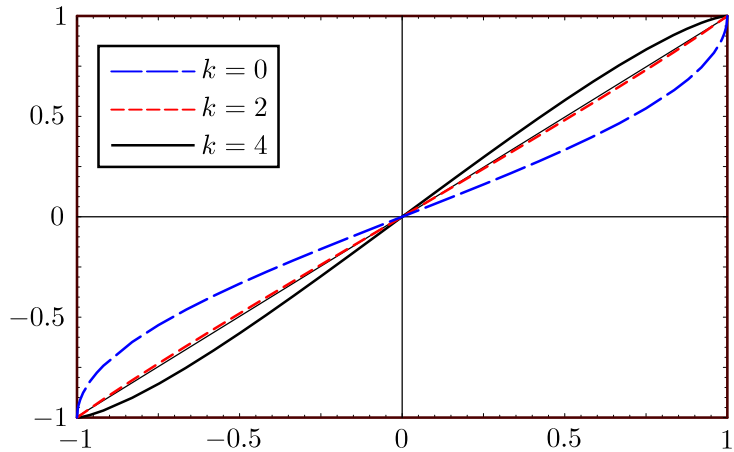}}
 \caption{Correlation functions obtained by the majority protocol relative to the line $h(x)=x$}
 \label{fig:correlationmaj}
\end{figure}

\begin{protocol}
\caption{}\label{basicdp}
\begin{tabular}{p{0.24 \textwidth}p{0.7 \textwidth}}
\textbf{Random Variables:}&
Alice and Bob share $k+1$ unit vectors $\vec
\lambda_i \in \RR^n$ for $i = 0,1, \ldots, k$, chosen independently and
uniformly at random from the unit sphere.\\
\textbf{Alice:}&
Let $\alpha_i = \sgn (\ip{\vec a ,\vec \lambda_i})$
for $i = 0,1,\ldots, k$.  Alice outputs $\alpha=\alpha_0$ and sends to Bob the
$k$ bits $\alpha_0 \alpha_1,\ldots,\alpha_0\alpha_{k}$.\\
\textbf{Bob:}&
Let $\beta_i = \sgn (\ip{\vec b ,\vec \lambda_i})$ for
$i = 0,1, \ldots, k$. Bob outputs
$$
\beta = \text{MAJ}(\beta_0, \alpha_0 \alpha_1 \beta_1, \ldots, \alpha_0 \alpha_{k} \beta_{k}),
$$
where $\text{MAJ}$ is the
majority function.
\end{tabular}
\end{protocol}

Let $g_{k}^{\text{MAJ}}: [0, 1] \to [-1, +1]$ be defined by
$$
g_{k}^{\text{MAJ}}(p) = 1 - 2 \sum_{i=0}^{k/2} \binom{k+1}{i} (1-p)^i p^{k+1-i}.
$$
Let
$$
h_{k}^{\text{MAJ}}(x) = g_{k}^{\text{MAJ}}\left(\frac{1}{\pi} \arccos(x)\right).
$$

\begin{lemma}
\label{aaeaa}
The correlation function of Protocol~\ref{basicdp} is $h_{k}^{\text{MAJ}}$.
\end{lemma}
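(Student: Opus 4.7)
The plan is to reduce the product $\alpha\beta$ to a majority vote over $k+1$ independent copies of the basic random-hyperplane protocol (Protocol~\ref{gl-prot:groth}), and then compute that majority via the binomial distribution.

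The first step is an algebraic simplification. Since $\alpha_0 \in \{-1,+1\}$, multiplication by $\alpha_0$ is a sign flip that commutes with $\text{MAJ}$, so
\[
\alpha \beta \;=\; \alpha_0 \cdot \text{MAJ}(\beta_0, \alpha_0 \alpha_1 \beta_1, \ldots, \alpha_0 \alpha_k \beta_k) \;=\; \text{MAJ}(\alpha_0\beta_0,\, \alpha_1 \beta_1, \ldots, \alpha_k \beta_k),
\]
using $\alpha_0^2 = 1$ inside each entry. In particular, the cleverly chosen messages cancel out of the final product $\alpha\beta$, leaving the majority of the products $\gamma_i := \alpha_i \beta_i$.

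Next I would invoke Lemma~\ref{gl-lemma:grothbasic}. Since the $\vec\lambda_i$ are independent and uniform, the random variables $\gamma_0,\ldots,\gamma_k$ are i.i.d.\ in $\{-1,+1\}$. For each $i$,
\[
\Pr[\gamma_i = -1] \;=\; \tfrac{1}{2}\bigl(1 - \Exp[\alpha_i \beta_i]\bigr) \;=\; \tfrac{1}{\pi}\arccos(\ip{\vec a,\vec b}) \;=:\; p,
\]
by the calculation in the proof of Lemma~\ref{gl-lemma:grothbasic}.

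Now I compute the majority. Since $k$ is even, $k+1$ is odd, and $\text{MAJ}(\gamma_0,\ldots,\gamma_k) = -1$ iff at least $k/2 + 1$ of the $\gamma_i$ equal $-1$. Hence
\[
\Pr[\alpha\beta = -1] \;=\; \sum_{i=k/2+1}^{k+1} \binom{k+1}{i} p^i (1-p)^{k+1-i} \;=\; \sum_{j=0}^{k/2} \binom{k+1}{j} (1-p)^j p^{k+1-j},
\]
where the second equality uses $\binom{k+1}{i} = \binom{k+1}{k+1-i}$ and the substitution $j = k+1-i$. Therefore
\[
\Exp[\alpha\beta] \;=\; 1 - 2 \Pr[\alpha\beta = -1] \;=\; g_k^{\text{MAJ}}(p) \;=\; g_k^{\text{MAJ}}\!\left(\tfrac{1}{\pi}\arccos(\ip{\vec a,\vec b})\right) \;=\; h_k^{\text{MAJ}}(\ip{\vec a,\vec b}),
\]
as desired. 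There is no real obstacle here; the only subtlety is the opening algebraic identity that lets the one-bit-per-round communication scheme simulate the naive $(k{+}1)$-bit scheme, and this is immediate once one writes $\alpha_0$ as the global sign flip.
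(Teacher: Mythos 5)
Your proof is correct and follows essentially the same route as the paper's: reduce $\alpha\beta$ to a majority of $k+1$ i.i.d.\ sign products and compute the binomial tail. You are slightly more explicit than the paper about the algebraic identity $\alpha\beta = \text{MAJ}(\alpha_0\beta_0,\ldots,\alpha_k\beta_k)$ and about the change of index $j=k+1-i$, both of which the paper leaves implicit, but there is no substantive difference.
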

\begin{proof}
Because the unit vectors $\vec \lambda_i$ are chosen independently,
the events $\alpha_i = \beta_i$ for $i=1,\ldots,k+1$ are independent.
Let $p = \Pr(\alpha_i  \neq \beta_i) = \left({1 - \Exp[\alpha_i \beta_i]}\right)/{2}$, which is independent of $i$.  By Lemma~\ref{gl-lemma:grothbasic},
$$
p = \frac1\pi \arccos(\ip{\vec a ,\vec b}).
$$
Thus
$$
\Pr(\alpha \beta = - 1) = \sum_{i=0}^{k/2} \binom{k+1}{i} (1-p)^i p^{k+1-i}.
$$
Noting that $\Exp[\alpha \beta] = 1 - 2 \Pr(\alpha \beta = - 1)$ completes the proof.
\end{proof}

\subsection{The orthant protocol}
\label{sec:maximal-k-bit}

In this section we present a more efficient protocol that, in some sense,
seems to give the strongest possible correlations. The protocol is
parameterized by an integer $k\ge 0$, and uses $k$ bits of one-way
communication. We call it the `orthant protocol' since it is based on the partitioning
of $k+1$ dimensional space into its $2^{k+1}$ orthants (where an orthant
is the higher-dimensional analogue of the two-dimensional quadrant).
As we shall see below, the correlation function achieved
by this protocol is determined by certain areas on the surface
of the sphere in $k+2$ dimensions. Such questions seem difficult in general
(see \cite{KarloffZ97}). Luckily, for our purposes it suffices to consider
the low-dimensional cases $k=0,1,2$ since the $k=2$ protocol already yields correlations that are
strong enough. The correlation functions produced by the protocol
are shown in Figure \ref{fig:correlationort}.

\begin{figure}[ht]
\center{\epsfxsize=4in\epsfbox{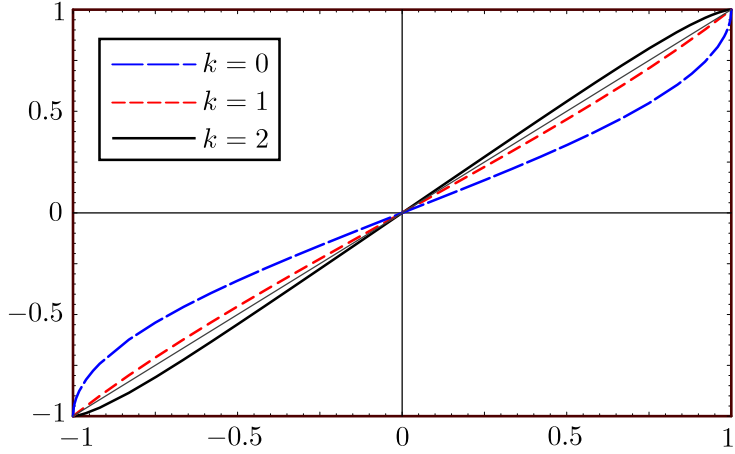}}
 \caption{Correlation functions obtained by the orthant protocol relative to the line $h(x)=x$}
 \label{fig:correlationort}
\end{figure}

\begin{protocol}
\caption{}\label{prot:maximal-k-bit}
\begin{tabular}{p{0.24 \textwidth}p{0.7 \textwidth}}
\textbf{Random Variables:}&
Alice and Bob share a random $(k+1) \times n$
matrix $G$ each of whose entries is an independent standard normal variable, i.e.,
a normal variable with mean 0 and variance 1.\\
\textbf{Alice:}&
Let $\alpha_i = \sgn ( (G \vec a)_i )$
for $i = 0, 1,\ldots, k$ and let $c_i = \alpha_0 \alpha_i$ for $i = 1,
2, \ldots, k$.  Alice outputs $\alpha_0$ and sends to Bob the $k$ bits
$c_1,\ldots,c_{k}$.\\
\textbf{Bob:}&
Bob outputs
$$
  \beta = \sgn \left[ \ip{G\vec b, (1,c_1,\ldots,c_{k})} \right].
$$
\end{tabular}
\end{protocol}

The protocol is given as Protocol \ref{prot:maximal-k-bit}. Roughly speaking,
Alice and Bob start by projecting their vectors onto a random $k+1$-dimensional subspace. Alice then sends to Bob the orthant inside
the $k+1$-dimensional space in which her vector lies, and Bob
uses the half-space determined by this orthant to determine his output.
To be more precise, instead of a random orthogonal projection
we use here a random Gaussian matrix $G$. This leads to a
much cleaner analysis, and moreover, in the limit of large $n$,
the two distributions are essentially the same.
We also use the same trick used in the majority protocol to
reduce the communication from the naive $k+1$ bits to $k$ bits.

We now analyze the correlation function given by this protocol.
For any unit vectors $\vec a, \vec b \in \RR^n$, the output of the protocol satisfies
\begin{align*}
\Exp[\alpha_0 \cdot \beta] &=
   \Exp[\sgn [\alpha_0 \cdot \ip{G\vec b, (1,c_1,\ldots,c_{k})} ]] \\
   &= \Exp[\sgn [\ip{G\vec b, (\alpha_0,\alpha_1,\ldots,\alpha_{k})} ]],
\end{align*}
where expectations are taken over the choice of $G$.
The expression inside the last expectation is $+1$ or $-1$ depending on whether $G\vec b$
is in the half-space defined by the center of the orthant containing
$G\vec a$. By symmetry it is enough to consider the positive orthant, and hence the above is equal to
\begin{align}\label{eq:corrasprob}
 2^{k+2} \Pr\left[\sum_{i=0}^{k} (G\vec b)_i \ge 0 \mbox{~and~} \forall 0 \le i \le k. (G\vec a)_i \ge 0 \right] - 1.
\end{align}

We now claim that the joint distribution of $G \vec a$ and $G \vec b$ is a $2k+2$-dimensional
Gaussian variable with mean $0$ and covariance matrix
\begin{align*}
M = \left(
\begin{array}{cc}
  I & \rho I \\
  \rho I &  I \\
\end{array}
\right),
\end{align*}
where each $I$ is a $(k+1) \times (k+1)$ identity matrix and
$\rho$ denotes the inner product $\ip{\vec a, \vec b}$. To see this, notice
that by the rotational invariance of the Gaussian distribution, we can assume
that $\vec a = (1,0,\ldots,0)$ and $\vec b = (\rho, \sqrt{1-\rho^2},0,\ldots,0)$.
The claim now follows by using the fact that the first two columns of $G$ are
two independent $(k+1)$-dimensional standard Gaussians, i.e., a Gaussian
with mean 0 and covariance $I$.

Our next observation is that the probability in Eq.~\eqref{eq:corrasprob}
depends only on the sum of coordinates of $G\vec b$. We therefore define
the real random variable $Z$ to be $\sum_{i=0}^{k} (G\vec b)_i$.
The joint distribution of $G \vec a$ and $Z$ is given by a $(k+2)$-dimensional Gaussian with mean $0$ and covariance matrix
$$ M' = A M A^t = \left(%
\begin{array}{ccccc}
  1 & 0 & \cdots & 0 & \rho \\
  0 & 1 & \cdots & 0 & \rho \\
  \vdots & \vdots & \ddots &  \vdots& \vdots \\
  0 & 0 & \cdots & 1 & \rho \\
  \rho & \rho & \cdots & \rho & k+1 \\
\end{array}%
\right),$$
where $A$ is the linear transformation taking $(G\vec a, G\vec b)$ to $(G \vec a, Z)$.
We therefore see that the probability in Eq.~\eqref{eq:corrasprob} is exactly
the probability that a vector sampled from a Gaussian distribution with
mean $0$ and covariance matrix $M'$ is in the positive orthant.

By the Cholesky decomposition, we can write $M' = C^t C$ for the $(k+2)\times (k+2)$ matrix
$$ C = \left(%
\begin{array}{ccccc}
  1 & 0 & \cdots & 0 & \rho \\
  0 & 1 & \cdots & 0 & \rho \\
  \vdots & \vdots & \ddots &  \vdots& \vdots \\
  0 & 0 & \cdots & 1 & \rho \\
  0 & 0 & \cdots & 0 & \sqrt{(k+1)(1-\rho^2)} \\
\end{array}%
\right).$$
It is easy to see that
$$ C^{-1} = \left(%
\begin{array}{ccccc}
  1 & 0 & \cdots & 0 & -\frac{\rho}{\sqrt{(k+1)(1-\rho^2)}} \\
  0 & 1 & \cdots & 0 & -\frac{\rho}{\sqrt{(k+1)(1-\rho^2)}} \\
  \vdots & \vdots & \ddots &  \vdots& \vdots \\
  0 & 0 & \cdots & 1 & -\frac{\rho}{\sqrt{(k+1)(1-\rho^2)}} \\
  0 & 0 & \cdots & 0 & \frac{1}{\sqrt{(k+1)(1-\rho^2)}} \\
\end{array}%
\right).$$
Since $(C^{-1})^t M' C^{-1} = I$, applying the linear transformation $(C^{-1})^t$
to a Gaussian random variable with mean $0$ and covariance matrix $M'$ transforms
it into a standard Gaussian variable. Under this transformation,
the positive orthant, which is the cone spanned by the standard
basis vectors, becomes the cone spanned by the rows of $C^{-1}$.
We conclude that the probability in Eq.~\eqref{eq:corrasprob} is exactly
the probability that a vector sampled from a standard Gaussian distribution
is in the cone spanned by the rows of $C^{-1}$. By the spherical symmetry
of the standard Gaussian distribution, we can equivalently ask
for the relative area of the sphere $S^{k+1} \subset \RR^{k+2}$
that is contained inside the cone spanned by the rows of $C^{-1}$.

\paragraph{The case $k=0$:}
We can now compute the probability in Eq.~\eqref{eq:corrasprob} for each of $k=0,1,2$.
We start with the simplest case of $k=0$. Here, we are interested
in the relative length of the circle $S^1$ contained in the
cone spanned by the rows of $C^{-1}$. Obviously, this is given
by the angle between the two vectors divided by $2\pi$, which is
$\arccos(-\rho)/(2\pi)$.
Hence by Eq.~\eqref{eq:corrasprob} the correlation function in this case is simply
$$ h_0^{\text{ORT}}(\rho) := \frac{2}{\pi}\arccos(-\rho) - 1 = \frac{2}{\pi}\arcsin(\rho).$$
We could also obtain this result by noting that the $k=0$ protocol
is essentially identical to the one from Section~\ref{ssec:localprim}.

\begin{figure}[ht]
\center{\epsfxsize=2.5in\epsfbox{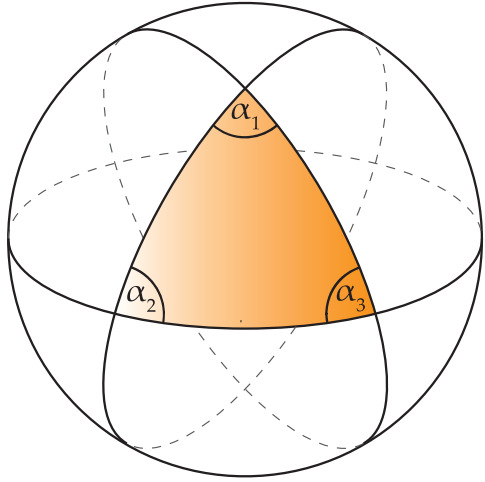}}
 \caption{A spherical triangle}
 \label{fig:sphericaltriangle}
\end{figure}

\paragraph{The case $k=1$:}
We now analyze the more interesting case $k=1$.
Here, we are interested in the relative area of the sphere $S^2$
contained in the cone spanned by the three rows of $C^{-1}$.
The intersection of $S^2$ with a cone spanned by three vectors
is known as a {\em spherical triangle}, see Figure \ref{fig:sphericaltriangle}.
Its area, as given by Girard's formula (see, e.g., \cite[Page 278]{BergerBook}), is
$\alpha_1 + \alpha_2 + \alpha_3 - \pi$ where $\alpha_1,\alpha_2,\alpha_3$
are the three angles of the triangle (as measured on the surface).
In more detail, if $v_1,v_2,v_3$ are the vectors spanning the
cone, then $\alpha_1$ is the angle between the
two vectors obtained by projecting $v_2$ and $v_3$ on the plane
orthogonal to $v_1$ (and similarly for $\alpha_2$ and $\alpha_3$).
In our case, the cone is spanned by $v_1=(\sqrt{2(1-\rho^2)},0,-\rho)$,
$v_2=(0,\sqrt{2(1-\rho^2)},-\rho)$, and $v_3=(0,0,1)$. Clearly
$\alpha_3 = \pi/2$ and a short calculation shows that
$\alpha_1=\alpha_2=\arccos(-\rho/\sqrt{2})$. Plugging this
into Girard's formula, and using the fact that the area
of the sphere is $4\pi$, we obtain that the relative area
of $S^2$ contained in the cone spanned by the rows of $C^{-1}$
is $(2 \arccos(-\rho/\sqrt{2}) - \pi/2)/(4\pi)$.
Hence by Eq.~\eqref{eq:corrasprob} the correlation function in this case is
\begin{align*}
h_1^\text{ORT}(\rho) := \frac{4}{\pi} \arccos(-\rho/\sqrt{2}) - 2 = \frac{4}{\pi} \arcsin(\rho/\sqrt{2}).
\end{align*}

\paragraph{The case $k=2$:}
We finally arrive at the most important case $k=2$. Here we are considering
{\em spherical tetrahedra}, defined as the intersection of $S^3$ with a
cone spanned by four vectors. Unlike the case of spherical triangles,
no closed formula is known for the volume of a spherical tetrahedron
(see \cite{KarloffZ97} for further discussion and references). Fortunately,
there is a simple formula for the {\em derivative} of the volume, as
we describe in the sequel.

We start with some preliminaries on spherical tetrahedra, closely following
Appendix A in \cite{KarloffZ97}. A spherical
tetrahedron is defined by four unit vectors $v_0,v_1,v_2,v_3 \in S^3$
forming its vertices. For $0\le i<j \le 3$ let $\theta_{ij} = \arccos(\ip{v_i,v_j})$
be the angle between $v_i$ and $v_j$. Equivalently, $\theta_{ij}$ is
the spherical length of the edge $ij$. Another set of six parameters
associated with a spherical tetrahedron are its {\em dihedral angles}
$\lambda_{ij}$, $0\le i<j \le 3$, describing the angle between the two
faces meeting at the edge $ij$. They are defined as
$$
\lambda_{01} = \arccos \frac{\ip{v_0 \wedge v_1 \wedge v_2, v_0 \wedge v_1 \wedge v_3}}
  {|v_0 \wedge v_1 \wedge v_2| |v_0 \wedge v_1 \wedge v_3|}
$$
and similarly for the other five dihedral angles, where the {\em high-dimensional
inner product} is defined as
$$
\ip{a_1 \wedge a_2 \wedge a_3, b_1 \wedge b_2 \wedge b_3} = \det \left(%
\begin{array}{ccc}
  \ip{a_1,b_1} & \ip{a_1,b_2} & \ip{a_1,b_3} \\
  \ip{a_2,b_1} & \ip{a_2,b_2} & \ip{a_2,b_3} \\
  \ip{a_3,b_1} & \ip{a_3,b_2} & \ip{a_3,b_3}\\
\end{array}%
\right)
$$
and the {\em high-dimensional norm} is given by
$$
|a_1 \wedge a_2 \wedge a_3| = \ip{a_1 \wedge a_2 \wedge a_3, a_1 \wedge a_2 \wedge a_3}^{1/2}.
$$
Finally, in order to compute the volume of a spherical tetrahedron we
use a formula due to Schl\"afli \cite{Schlafli1858}, which says that for every $0 \le i<j\le 3$,
$$
\frac{\partial \Vol}{\partial \lambda_{ij}} = \frac{\theta_{ij}}{2},
$$
where $\Vol = \Vol(\lambda_{01},\lambda_{02},\lambda_{03},\lambda_{12},\lambda_{13},\lambda_{23})$ is the volume of a spherical tetrahedron with
the given edge lengths.

Our goal is to compute the volume of the spherical tetrahedron whose
vertices are the rows of $C^{-1}$ normalized to be of norm $1$,
\begin{align*}
v_0 &= \left(\sqrt{3-3\rho^2},0,0,-\rho \right) / \sqrt{3-2\rho^2}\\
v_1 &= \left(0,\sqrt{3-3\rho^2},0,-\rho \right) / \sqrt{3-2\rho^2} \\
v_2 &= \left(0,0,\sqrt{3-3\rho^2},-\rho \right) / \sqrt{3-2\rho^2}\\
v_3 &= \left(0,0,0,1\right).
\end{align*}
 From this it easily follows that
$$ \theta_{03} = \theta_{13} = \theta_{23} = \arccos(-\rho / \sqrt{3-2\rho^2})
  \quad \mbox{and} \quad
  \theta_{01} = \theta_{02} = \theta_{12} = \arccos(\rho^2 / (3-2\rho^2)).$$
Moreover, a straightforward calculation reveals that
$$ \lambda_{03} = \lambda_{13} = \lambda_{23} = \pi/2
  \quad \mbox{and} \quad
  \lambda_{01} = \lambda_{02} = \lambda_{12} = \arccos(-\rho / \sqrt{3}),$$
and that the derivative of the latter term as a function of $\rho$
is $(3-\rho^2)^{-1/2}$.
By using Schl\"afli's formula and integrating along $\rho$, we obtain that the volume of our spherical tetrahedron is
$$ \int_{-1}^{\rho} 3 \cdot \frac{1}{2} \arccos(\sigma^2 / (3-2\sigma^2)) \cdot (3-\sigma^2)^{-1/2} d\sigma, $$
where we used that for $\rho=-1$ this volume is $0$. Since the total area of
$S^3$ is $2 \pi^2$, we obtain using Eq.~\eqref{eq:corrasprob}
that the correlation function in this case is
\begin{align}\label{eq:correlationk2}
h_2^\text{ORT}(\rho) := \frac{12}{\pi^2} \int_{-1}^{\rho} \frac{\arccos(\sigma^2 / (3-2\sigma^2))}{\sqrt{3-\sigma^2}} d\sigma - 1.
\end{align}

\section{Simulation of the Joint Correlation}
\label{sec:simul-joint-corr}

In this section we describe how to take any protocol
whose correlation function is `strong enough', and use it
to solve Problem~\ref{prob:simulatingclassical}. This section as
well as the next one rely on some basic facts from the theory
of real analytic functions which can be found in, e.g., \cite{RealAnalyticBook}.
As we said earlier, the idea is to carefully choose a mapping
$C$ such that when Alice and Bob apply the protocol on $C(\vec a)$
and $C(\vec b)$, the resulting correlation function will be correct.

\begin{protocol}
\caption{Transformed protocol}
\label{two-prot:infinite}
Alice and Bob map their vectors to $C(\vec a)$ and $C(\vec b)$ and run
the original protocol on these vectors.
\end{protocol}

Fix some arbitrary correlation function $h:[-1,1] \to [-1,1]$.
We now give sufficient conditions on $h$ under which the required
transformation $C$ exists.
First, we require that $h(1)=1$ and that $h$ is odd, continuous, and monotonically increasing.
Moreover, we require that its series expansion about $0$,
\begin{align}\label{eq:seriesofh}
h(x) = \sum_{k=0}^\infty {c_{2k+1}} x^{2k+1},
\end{align}
converges to $h(x)$ on the interval $(-1,1)$, which implies that $h$ is (real) analytic on $(-1,1)$.
Finally, we require that $c_1>0$ and $c_{2k+1} \leq 0$ for all $k >0$.

In Section~\ref{sec:powerseries} we will show that the orthant protocol with $k=2$
satisfies these properties.
A crucial fact for our protocol is that under the above requirements on $h$,
the power series of $h^{-1}$ converges on $[-1,1]$ and all its coefficients are nonnegative.
This is shown in the following lemma.

\begin{lemma}\label{lemma:1}
If $h$ satisfies the above conditions, then $h^{-1}$ has a power series expansion
\begin{align}\label{eq:seriesofhinv}
h^{-1}(x) = \sum_{k=0}^\infty {d_{2k+1}} x^{2k+1}
\end{align}
that converges on $[-1,1]$ and satisfies $d_{2k+1} \ge 0$ for all $k \ge 0$.
\end{lemma}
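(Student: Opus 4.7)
My plan is to proceed in three stages: (i) show $h^{-1}$ is real analytic on $(-1,1)$; (ii) show the Taylor coefficients $d_{2k+1}$ at $0$ are nonnegative by formal inversion; (iii) upgrade convergence to the closed interval $[-1,1]$ using nonnegativity together with Pringsheim's theorem.

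For stage (i), $h^{-1}:[-1,1]\to[-1,1]$ is well-defined, continuous, odd, and strictly increasing from the hypotheses on $h$, and $h$ is real analytic on $(-1,1)$ since its Taylor series at $0$ converges there. The key subpoint is to show $h'(x)>0$ on $(-1,1)$, so that the real analytic inverse function theorem applies. Writing $a_{2k+1} := -c_{2k+1}\geq 0$, termwise differentiation gives $h'(x)=c_1 - \sum_{k\geq 1}(2k+1)a_{2k+1}x^{2k}$, which is nonincreasing in $x^2$ on $(-1,1)$. If $h'(x_0)\leq 0$ at some $x_0 \in [0,1)$, then $h'\leq 0$ on $[x_0,1)$; by real analyticity this forces $h'\equiv 0$ on $(-1,1)$, contradicting $h(1)=1$. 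Thus $h'>0$ on $(-1,1)$, and so $h^{-1}$ is real analytic on $h((-1,1))=(-1,1)$ with some Taylor expansion $\sum d_{2k+1}y^{2k+1}$ about $0$.

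For stage (ii), I would substitute $x=h^{-1}(y)$ into the identity $y = c_1 x - \sum_{k\geq 1} a_{2k+1}x^{2k+1}$ and rearrange to obtain the fixed-point form
\[
h^{-1}(y) \;=\; \frac{y}{c_1} \;+\; \frac{1}{c_1}\sum_{k\geq 1} a_{2k+1}\,\bigl(h^{-1}(y)\bigr)^{2k+1}.
\]
Comparing coefficients of $y^{2m+1}$ on both sides writes $d_{2m+1}$ as a polynomial with nonnegative coefficients in $\{a_{2k+1}/c_1\}_{k\geq 1}$ and $\{d_{2j+1}\}_{j<m}$: for $1\leq k\leq m$, extracting $[y^{2m+1}](h^{-1}(y))^{2k+1}$ involves only $d_{2j+1}$ with $j\leq m-k$ (the boundary case $k=m$ contributes $d_1^{2m+1}$). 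Starting from $d_1 = 1/c_1 > 0$, induction on $m$ yields $d_{2m+1}\geq 0$ for all $m \geq 0$.

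For stage (iii), let $R$ be the radius of convergence of $\sum d_{2k+1} y^{2k+1}$. By Pringsheim's theorem for power series with nonnegative coefficients, the analytic continuation of $h^{-1}$ must have a singularity at $y=R$ on the positive real axis. Since $h^{-1}$ is real analytic on all of $(-1,1)$, this forces $R\geq 1$. To handle the endpoints, note that for any $N$ and $y\in(0,1)$, the partial sum $\sum_{k=0}^N d_{2k+1}y^{2k+1}$ is bounded above by $h^{-1}(y)\leq 1$; letting $y \to 1^-$ gives $\sum_{k=0}^N d_{2k+1}\leq 1$ for every $N$, so the series converges at $y=1$ (and sums to $h^{-1}(1)=1$ by Abel's theorem). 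Convergence at $y=-1$ then follows by oddness. The step I expect to be the main obstacle is the nonnegativity argument in stage (ii): one must confirm that no hidden cancellations can spoil the sign of $d_{2m+1}$, and this is precisely the place where the sign hypothesis $c_{2k+1}\leq 0$ for $k\geq 1$ (together with $c_1>0$) is crucial. This is the Krivine-style power series heart of the construction.
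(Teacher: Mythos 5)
Your proof is correct, and it reaches the same conclusions as the paper, but by a genuinely different route in the two substantial steps. For the nonnegativity of $d_{2k+1}$, the paper invokes an explicit Lagrange-inversion-type formula (Morse--Feshbach, Eq.~4.5.12) for the coefficients of the inverse series, from which nonnegativity is immediate because every term carries a factor $(-c_{2k+1}/c_1)^{\ell}\ge 0$. You instead derive the fixed-point identity $h^{-1}(y)=y/c_1 + (1/c_1)\sum_{k\ge 1} a_{2k+1}(h^{-1}(y))^{2k+1}$ and run an induction on the coefficient index; this is more elementary (no closed-form inversion formula needed) and equally rigorous, since $[y^{2m+1}](h^{-1})^{2k+1}$ only involves $d_{2j+1}$ with $j\le m-k$, so the recursion is well-founded. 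For convergence on $[-1,1]$, the paper proves the explicit quantitative bound $d_{2k+1}\le 1/(2k+1)$ by extending the Lagrange sum, applying the multinomial theorem, and summing a negative binomial series; convergence on $(-1,1)$ is then immediate and the endpoints follow from Abel's lemma. You instead appeal to Pringsheim's theorem: nonnegativity of the $d$'s makes $z=R$ a singular point, while real analyticity of $h^{-1}$ on $(-1,1)$ forces $R\ge 1$, and the endpoints follow from bounded monotone partial sums plus Abel. Your route is shorter and conceptually cleaner but imports a heavier classical theorem, whereas the paper's estimate is self-contained and gives extra information (an explicit decay rate for the $d_{2k+1}$). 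Two small points worth tightening in your write-up: (a) in the Pringsheim step you should spell out that a holomorphic extension of $h^{-1}$ near $R\in(0,1)$ would agree with the power-series function on the overlap by the identity theorem (they coincide on a real interval), so $R$ would be a regular point, which is what produces the contradiction; (b) your argument that $h'>0$ on $(-1,1)$ currently jumps from ``$h'\le 0$ on $[x_0,1)$'' to ``$h'\equiv 0$ by analyticity,'' which needs the additional input that $h$ is monotonically increasing (so $h'\ge 0$, hence $h'=0$ on a subinterval, and only then does real analyticity propagate the vanishing). The paper glosses over the $h'\neq 0$ check entirely, so your making it explicit is a nice addition once patched.
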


\begin{proof}
We first notice that under the above conditions, the power
series in Eq.~\eqref{eq:seriesofh} converges to $h$ also at the endpoints $-1,1$.
This is easy to prove and follows from Abel's lemma,
which says that if all but finitely many of the coefficients
of a power series are nonnegative (or nonpositive), then the value of
the series at the endpoints is given by the limit of its values
as we approach the endpoint. We hence see that
$\sum_{k=0}^\infty {c_{2k+1}} = 1$.

It also follows easily that the inverse function $h^{-1}:[-1,1] \to [-1,1]$
is well-defined and is odd. Moreover,
by the real analytic inverse function theorem (see \cite[Theorem 1.5.3]{RealAnalyticBook}),
$h^{-1}$ is analytic on $(-1,1)$ and hence has a series expansion about $x=0$,
as in Eq.~\eqref{eq:seriesofhinv}. In order to analyze this series,
we use a known formula for the coefficients of an inverse function (see, e.g.,
\cite[Eq.~(4.5.12)]{MS}):
$$
d_{k} = \frac{1}{k c_1^{k}} \sum_{\ell_1, \ell_2, \ldots}
 \frac{(k) (k+1) \cdots (k - 1 +
\ell_1 + \ell_2  + \cdots )}{\ell_1!\ell_2!\ell_3!\cdots}
\left(-\frac{c_2}{c_1}\right)^{\ell_1}
\left(-\frac{c_3}{c_1}\right)^{\ell_2}\cdots,
$$
where the sum runs over nonnegative integers satisfying $\ell_1 +
2\ell_2 + 3\ell_3 + \cdots = k-1$.  Since in our case every term in the sum is
nonnegative, it follows that $d_k \geq 0$ for all $k$, as required.

It remains to show that the series converges on $[-1,1]$.
In fact, it is sufficient to show that the series converges on $(-1,1)$:
convergence at the endpoints $-1,1$ would follow by Abel's lemma, as before.
We do this by showing that $d_{k} \leq 1/k$, as this immediately implies that
the power series converges on the interval $(-1,1)$.
Using the above formula, we get that for all $k >0$
\begin{align*}
d_{2k+1} &= \frac{1}{(2k+1) c_1^{2k+1}} \sum_{\ell_2, \ell_4,
\ldots} \frac{(2k+1) (2k+2) \cdots (2k + \ell_2 + \ell_4 + \cdots
)}{\ell_2!\ell_4!\cdots} \left(-\frac{c_3}{c_1}\right)^{\ell_2}
\left(-\frac{c_5}{c_1}\right)^{\ell_4}\cdots,
  \end{align*}
where the sum runs over nonnegative integers satisfying $\ell_2 + 2
\ell_4 + 3 \ell_6 + \cdots = k$.  We extend the sum to all
nonnegative integers $\ell_2, \ell_4, \ldots$, obtaining
\begin{align}
d_{2k+1}
 &\leq \frac{1}{(2k+1) c_1^{2k+1}} \sum_{m=0}^\infty \binom{2k+m}{m}\sum_{\ell_2, \ell_4, \ldots} \frac{m!}{\ell_2!\ell_4!\cdots} \left(-\frac{c_3}{c_1}\right)^{\ell_2} \left(-\frac{c_5}{c_1}\right)^{\ell_4}\cdots \label{eq:1}\\
 & = \frac{1}{(2k+1) c_1^{2k+1}} \sum_{m=0}^\infty
\binom{2k+m}{m}\left(\frac{- c_3 - c_5 - \cdots}{c_1}\right)^m,
\label{eq:2}
\end{align}
where the inner sum in Eq.~\eqref{eq:1} is over all indices $\ell_2
+ \ell_4 + \cdots =m$ and we used the multinomial theorem to obtain
Eq.~\eqref{eq:2}.  By our observation above,
$- \sum_{k=1}^\infty c_{2k+1} = c_1 - 1$.
Set $z = 1-1/c_1$.  Then $0\leq z <1$ and
\begin{align*}
d_{2k+1} \leq \frac{1}{2k+1} (1-z)^{2k+1} \sum_{m=0}^\infty
\binom{2k+m}{m} z^m  = \frac{1}{2k+1} (1-z)^{2k+1}(1-z)^{-(2k+1)} =
\frac{1}{2k+1},
  \end{align*}
  since the sum is just the negative binomial series.  We conclude
  that the power series converges to $h^{-1}$ on the interval
  $(-1,1)$, which also implies convergence at the endpoints by
  Abel's lemma.
\end{proof}

The transformation $C$ is obtained by applying the following lemma to $h^{-1}$.

\begin{lemma}\label{lemma:transformationC}
Let $f:[-1,1] \to [-1,1]$ be a function with a power series expansion
$f(x) = \sum_{k=0}^\infty {d_{k}} x^{k}$ that converges on $[-1,1]$ and satisfies $d_{k} \ge 0$ for all $k \ge 0$
and $f(1)=1$. Then for any $n\ge 1$, there exists a transformation $C:S^{n-1} \to S^\infty$ such
that for all $\vec a, \vec b \in S^{n-1}$, $\ip{C(\vec a) ,C(\vec b)} =
f \big(\ip{\vec a ,\vec b} \big).$
\end{lemma}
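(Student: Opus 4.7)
The plan is to exhibit the transformation $C$ explicitly using tensor powers, so that the $k$-th component of $C(\vec a)$ encodes the $k$-th term of the power series of $f$. The key identity driving everything is that for any unit vectors $\vec a, \vec b \in \RR^n$ and any integer $k \ge 0$,
\[
\langle \vec a^{\otimes k}, \vec b^{\otimes k}\rangle \;=\; \langle \vec a, \vec b\rangle^{k},
\]
where the inner product is taken in $(\RR^n)^{\otimes k}$. So the coordinates $\vec a^{\otimes k}$ produce precisely the monomials $\langle \vec a,\vec b\rangle^k$ appearing in the series of $f$.

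Concretely, I will work in the Hilbert space $\mathcal{H} = \bigoplus_{k=0}^\infty (\RR^n)^{\otimes k}$ (with the $0$-th summand being $\RR$), and define
\[
C(\vec a) \;=\; \bigoplus_{k=0}^{\infty} \sqrt{d_k}\,\vec a^{\otimes k}.
\]
The hypothesis $d_k \ge 0$ is exactly what is needed to make the real square roots $\sqrt{d_k}$ well-defined, so $C(\vec a)$ is a bona fide vector in $\mathcal{H}$ (identified with $S^\infty$ after checking it has unit norm). I will then verify the two required properties:

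First, $C(\vec a)$ lies on the unit sphere: since $\|\vec a^{\otimes k}\|^2 = \|\vec a\|^{2k} = 1$, we have
\[
\|C(\vec a)\|^2 \;=\; \sum_{k=0}^\infty d_k \;=\; f(1) \;=\; 1,
\]
where convergence of the series on $[-1,1]$ together with $f(1)=1$ does the job. Second, the inner product computation:
\[
\langle C(\vec a), C(\vec b)\rangle \;=\; \sum_{k=0}^\infty d_k \langle \vec a^{\otimes k}, \vec b^{\otimes k}\rangle \;=\; \sum_{k=0}^\infty d_k \langle \vec a,\vec b\rangle^k \;=\; f(\langle \vec a,\vec b\rangle),
\]
where swapping sum and inner product is legitimate because the partial sums converge in norm (again by $d_k \ge 0$ and $\sum d_k = 1$), and the final equality is just the defining power series of $f$, which converges on $[-1,1]$ and in particular at $\langle \vec a,\vec b\rangle \in [-1,1]$.

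There is no serious obstacle here; the proof is essentially a bookkeeping exercise once the tensor-power construction is written down. The only place where one needs to be a bit careful is the justification that the infinite direct sum converges to an honest vector in a separable Hilbert space and that the inner product of the limits equals the limit of the inner products; this is standard and follows from absolute convergence of $\sum d_k$ combined with $|\langle \vec a^{\otimes k}, \vec b^{\otimes k}\rangle| \le 1$. All three hypotheses on $f$ are used: nonnegativity of $d_k$ for real square roots, convergence on $[-1,1]$ so that $C$ is defined on the whole sphere, and $f(1)=1$ so that the image lands on the unit sphere rather than a smaller sphere.
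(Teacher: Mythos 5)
Your proof is correct and follows essentially the same route as the paper: both define $C(\vec v)=\bigoplus_{k\ge 0}\sqrt{d_k}\,\vec v^{\otimes k}$, use $\ip{\vec a^{\otimes k},\vec b^{\otimes k}}=\ip{\vec a,\vec b}^k$, and note that $\sum d_k = f(1)=1$ gives unit norm. You spell out the convergence bookkeeping (absolute convergence of $\sum d_k$, exchanging sum and inner product) a little more explicitly than the paper does, but this is the same construction and argument.
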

\begin{proof}
Define
$$
   C(\vec v) = \bigoplus_{k=0}^\infty \sqrt{{d_{k}}}\, {\vec v}\,^{\otimes k},
$$
where ${\vec v}\,^{\otimes k}$ denotes the vector $\vec v \otimes \vec v \otimes
\cdots \otimes \vec v$ with $k$ tensor factors.
Note that this is well-defined, since $d_{k} \geq 0$ for all $k$.
By definition, for any $\vec a, \vec b \in S^{n-1}$ we have
$$
 \ip{C(\vec a) ,C(\vec b)} =
 \sum_{k=0}^\infty d_{k} \ip{\vec a^{\otimes k} ,\vec b^{\otimes k}} =
 \sum_{k=0}^\infty d_{k} \ip{\vec a ,\vec b }^{k} =
 f \big(\ip{\vec a ,\vec b} \big),
$$
which in particular implies that $C(\vec a)$ is a unit vector for any $\vec a \in S^{n-1}$.
\end{proof}

\begin{remark}
\label{remark:schoenberg}
By Schoenberg's theorem~\cite[Theorem 2]{MR0005922}, the conditions on
$f$ in Lemma~\ref{lemma:transformationC} are in fact necessary for the transformation
$C$ to exist.
It is also known that if we are only interested in a transformation $C$ for
a particular value of $n$ (rather than for all $n \ge 1$),
it is sufficient (and necessary) to require that $f$, in addition to satisfying $f(1)=1$,
has a non-negative convergent series expansion in Gegenbauer polynomials~\cite{Krivine:79a,MR0005922}.
\end{remark}

\begin{theorem}
Protocol~\ref{two-prot:infinite} is well-defined and solves Problem~\ref{prob:simulatingclassical}.
\end{theorem}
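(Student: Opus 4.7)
The plan is to simply chain together the two preceding lemmas: Lemma~\ref{lemma:1} gives us enough about $h^{-1}$ to apply Lemma~\ref{lemma:transformationC}, which produces the map $C$, and then the correlation analysis reduces to $h\circ h^{-1} = \mathrm{id}$. First I would verify the hypotheses of Lemma~\ref{lemma:transformationC} for the choice $f = h^{-1}$. By Lemma~\ref{lemma:1}, the function $h^{-1}$ admits a power series expansion on $[-1,1]$ with nonnegative coefficients, and since $h(1)=1$ and $h$ is a bijection of $[-1,1]$ (by the assumptions of oddness, continuity, and strict monotonicity, together with $h(1)=1$), we also have $h^{-1}(1) = 1$. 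Applying Lemma~\ref{lemma:transformationC} with $n$ equal to the dimension of Alice's and Bob's inputs then yields a map $C:S^{n-1}\to S^\infty$ satisfying $\ip{C(\vec a),C(\vec b)} = h^{-1}(\ip{\vec a,\vec b})$ for all unit vectors $\vec a, \vec b$. In particular, $C(\vec a)$ is always a unit vector, so feeding $C(\vec a),C(\vec b)$ into the base protocol is meaningful.

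Next I would compute the correlation of the transformed protocol. Because the base protocol has correlation function $h$, the output bits $\alpha,\beta$ satisfy
\begin{equation*}
\Exp[\alpha\beta] = h\bigl(\ip{C(\vec a),C(\vec b)}\bigr) = h\bigl(h^{-1}(\ip{\vec a,\vec b})\bigr) = \ip{\vec a,\vec b},
\end{equation*}
which is exactly the correlation required by Problem~\ref{prob:simulatingclassical}. The bound $\ip{\vec a,\vec b}\in[-1,1]$ together with the fact that $h^{-1}$ maps $[-1,1]$ into $[-1,1]$ ensures the argument of $h$ lies in its domain.

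The one subtlety is the well-definedness clause: the base protocol (for example, the orthant protocol) is stated for inputs in a finite-dimensional Euclidean space, whereas $C(\vec a)$ lives in an infinite-dimensional Hilbert space. Here I would observe that by the rotational invariance argument sketched in Section~\ref{sec:some-comm-prim}, the joint distribution of the base protocol's output depends only on the pair $(C(\vec a), C(\vec b))$ through its inner product. Equivalently, one may realize the necessary Gaussian samples by sampling inside the two-dimensional subspace $\mathrm{span}(C(\vec a), C(\vec b))$ (extended by independent Gaussians in the orthogonal complement, which do not affect the output of the signs involved in the protocol). This makes the transformed protocol implementable and justifies applying the base protocol's correlation formula in infinite dimensions. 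The hardest part of this argument is the bookkeeping to rigorously justify the infinite-dimensional step, but since everything reduces to computations in the two-dimensional span of $C(\vec a)$ and $C(\vec b)$ and the correlation function $h$ is continuous, no genuine analytic difficulty arises beyond what has already been settled by Lemmas~\ref{lemma:1} and \ref{lemma:transformationC}.
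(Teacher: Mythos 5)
Your proof is correct and follows essentially the same route as the paper's: chain Lemma~\ref{lemma:1} into Lemma~\ref{lemma:transformationC} to get the map $C$, then compute $\Exp[\alpha\beta]=h(h^{-1}(\ip{\vec a,\vec b}))=\ip{\vec a,\vec b}$. Your additional paragraph on realizing the base protocol in the two-dimensional span of $C(\vec a),C(\vec b)$ is a correct and slightly more careful justification of the well-definedness claim that the paper dispatches in a single sentence.
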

\begin{proof}
By Lemma~\ref{lemma:1}, we can apply Lemma~\ref{lemma:transformationC} to $h^{-1}$
to obtain the transformation $C$. Since $C$ maps unit vectors to unit vectors,
Protocol~\ref{two-prot:infinite} is well-defined.
Moreover, its output satisfies
$$
\Exp[\alpha \beta] = h\big( \ip{C(\vec a) ,C(\vec b)} \big) =
  h\circ h^{-1} \big( \ip{\vec a ,\vec b} \big) =
  \ip{\vec a ,\vec b},
$$
as required.
\end{proof}

\begin{remark}
  As mentioned in Remark~\ref{remark:schoenberg}, the properties
  of $h^{-1}$ in the conclusion of Lemma~\ref{lemma:1} are not just
  sufficient, but also necessary for Protocol~\ref{two-prot:infinite} to be well-defined.
  It is therefore natural to ask if the conditions on $h$
  in Lemma~\ref{lemma:1} are also necessary. It turns
  out that they are not: take, for example, $h$ such that $h^{-1}(x) = 0.9 x +
  0.1 x^3$, calculate that $h(x) = 1.11 x - 0.15 x^3 + 0.06 x^5 +
  O(x^7)$, and notice that the coefficient of $x^5$ is positive.
  So is there a necessary and sufficient condition?
  We do not know. Looking at Figure~\ref{prot:maximal-k-bit},
  one might be tempted to replace
  the condition that $c_1>0$ and $c_{2k+1} \leq 0$ for all $k >0$ with
  the weaker condition that $h(x) \geq x$ for $x \geq 0$.
  But this condition is not sufficient: the function $h(x) = x + 0.1 x^3 - 0.1 x^5$
  satisfies $h(x)\geq x$ for $x \ge 0$ (as well as our other requirements), but
  $h^{-1}(x) = x - 0.1 x^3 + O(x^5)$.
\end{remark}

\def\ax{\arccos x}
\def\bax{\left( \arccos x \right)}
\def\as{\arcsin x}
\def\bas{\left( \arcsin x \right)}

\section{Analysis of the Power Series}
\label{sec:powerseries}

In this section, we show that the orthant protocol with $k=2$ satisfies the requirements listed in
Section~\ref{sec:simul-joint-corr} and hence can be used
to simulate quantum correlations with only two bits of communication.
A similar but much more involved analysis holds also for the majority
protocol with $k=4$ and implies a protocol with four bits of communication.
We omit this analysis since the orthant protocol is superior in all respects.

\begin{lemma}
Let $h(x) = h_{2}^\text{ORT}(x)$ be as given in Eq.~\eqref{eq:correlationk2}.
Then $h(1)=1$ and $h$ is odd, continuous, and monotonically increasing.
Moreover, it is (real) analytic on $(-1,1)$, and its power series about $x=0$,
$$
h(x) = \sum_{k=0}^\infty {c_{2k+1}} x^{2k+1},
$$
satisfies $c_1>0$ and $c_{2k+1} < 0$ for all $k >0$.
\end{lemma}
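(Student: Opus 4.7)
The four properties $h(1)=1$, oddness, continuity, and strict monotonicity, together with real analyticity on $(-1,1)$, all follow by inspection of the integral formula \eqref{eq:correlationk2} defining $h = h_2^{\text{ORT}}$. The integrand $\arccos(\sigma^2/(3-2\sigma^2))/\sqrt{3-\sigma^2}$ is even in $\sigma$; combined with $h(1)=1$ (either by direct evaluation of the integral, or immediately from the protocol, since at $\vec a=\vec b$ one has $G\vec a=G\vec b$ and hence $\alpha_0=\beta$), this gives $h(\rho)+h(-\rho)=0$. Strict monotonicity and continuity come from the strict positivity and continuity of the integrand on $(-1,1)$: the rational map $\sigma\mapsto\sigma^2/(3-2\sigma^2)$ sends $(-1,1)$ strictly into $[0,1)$, so $\arccos$ of this argument lies in $(0,\pi/2]$, and the same observation yields real analyticity on $(-1,1)$ since $\arccos$ is real analytic on $(-1,1)$.

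The substantive part is the sign pattern of the Taylor coefficients. From \eqref{eq:correlationk2} we have $h'(x)=\frac{12}{\pi^2}F(x^2)$ with
\[
F(t)\;:=\;\frac{\arccos(t/(3-2t))}{\sqrt{3-t}},
\]
so writing $F(t)=\sum_{k\ge 0}a_k t^k$ around $t=0$, the $c$'s satisfy $(2k+1)c_{2k+1}=\tfrac{12}{\pi^2}a_k$. It therefore suffices to prove $a_0>0$ and $a_k<0$ for every $k\ge 1$. The first is immediate: $a_0=F(0)=\pi/(2\sqrt 3)>0$.

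For the negativity of the remaining coefficients, my plan is to extract a first-order linear inhomogeneous ODE from the defining identity and then run a one-line induction. Differentiating $\sqrt{3-t}\,F(t)=\arccos(t/(3-2t))$ and simplifying the right-hand side using $1-(t/(3-2t))^2=3(1-t)(3-t)/(3-2t)^2$, then multiplying through by $-2\sqrt{3-t}$, gives
\[
F(t)\;=\;2(3-t)\,F'(t)\;+\;\frac{2\sqrt{3}}{(3-2t)\sqrt{1-t}}.
\]
Let $b_n:=[t^n]\bigl(2\sqrt{3}/[(3-2t)\sqrt{1-t}]\bigr)$. Both factors $(3-2t)^{-1}=\tfrac13\sum_n(2t/3)^n$ and $(1-t)^{-1/2}=\sum_n\binom{2n}{n}4^{-n}t^n$ have strictly positive Taylor coefficients, hence so does their product; thus $b_n>0$ for every $n\ge 0$. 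Equating coefficients of $t^n$ on both sides of the ODE yields the recurrence
\[
a_{n+1}\;=\;\frac{(2n+1)\,a_n\;-\;b_n}{6(n+1)}\qquad(n\ge 0).
\]

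The induction now closes the argument. For the base case, $a_1=(a_0-b_0)/6=(\pi/(2\sqrt 3)-2/\sqrt 3)/6=(\pi-4)/(12\sqrt 3)<0$ since $\pi<4$. For the inductive step, if $a_n<0$ for some $n\ge 1$, then $(2n+1)a_n<0$ and $-b_n<0$, so the numerator in the recurrence is negative and dividing by the positive denominator $6(n+1)$ preserves the sign: $a_{n+1}<0$. I expect the only non-routine step to be spotting the right ODE; once written down, the recurrence has a structure that forces every coefficient after $a_1$ to inherit its sign, which is exactly what we need.
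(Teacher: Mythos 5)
Your proof is correct, and it takes a genuinely different route from the paper's. The paper works with the second derivative, writing $h''(x)=-\frac{24x}{\pi^2}H_1(x^2)H_2(x^2)$ with $H_1(t)=(3-t)^{-3/2}(3-2t)^{-1}$ and $H_2(t)=\sqrt{3(3-t)/(1-t)}-\tfrac{3-2t}{2}\arccos\bigl(\tfrac{t}{3-2t}\bigr)$, and then establishes that $H_1$ and $H_2$ each have nonnegative Taylor coefficients: $H_1$ by inspection, and $H_2$ by checking $H_2(0)>0$, $H_2'(0)>0$, and that $H_2''$ is a manifestly coefficient-positive algebraic expression. You instead isolate $F(t)$ with $h'(x)=\tfrac{12}{\pi^2}F(x^2)$, derive the first-order linear ODE $F(t)=2(3-t)F'(t)+\tfrac{2\sqrt3}{(3-2t)\sqrt{1-t}}$ by differentiating the defining identity $\sqrt{3-t}\,F(t)=\arccos(t/(3-2t))$, read off the coefficient recurrence $a_{n+1}=\bigl((2n+1)a_n-b_n\bigr)/\bigl(6(n+1)\bigr)$ with $b_n>0$, and close with a one-line induction from the base case $a_1=(\pi-4)/(12\sqrt3)<0$. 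I checked the algebra: the simplification $1-(t/(3-2t))^2=3(1-t)(3-t)/(3-2t)^2$ is right, the ODE follows after multiplying by $-2\sqrt{3-t}$, and the recurrence, base case, and positivity of $b_n$ all hold. What your approach buys is a more systematic mechanism — once the ODE is written down, the sign propagation is automatic, and the positivity of the inhomogeneous term is immediate from the product of two coefficient-positive series — whereas the paper's factorization of $h''$ is somewhat ad hoc (one has to guess the right $H_1,H_2$ split and then iterate differentiation on $H_2$). The trade-off is that your method requires spotting the ODE, and the paper's requires spotting the factorization; both are elementary once found. Your treatment of the preliminary claims (oddness via evenness of the integrand plus $h(1)=1$, monotonicity and analyticity from positivity and analyticity of the integrand) is also sound and matches what the paper dismisses as "obvious."
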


\begin{proof}

The first four conditions are obvious. Moreover, being composed of
analytic functions, $h$ is easily seen to be analytic on $(-1,1)$.
We now show that the coefficients $c_{2k+1}$ have the right sign.
 From the derivative
$$
h'(x) = \frac{12 \arccos(x^2/(3-2x^2))}{\pi^2 \sqrt{3-x^2}}
$$
it follows that $c_1 = h'(0) = 2 \sqrt3 /\pi > 0$.
To show that the rest of the coefficients are negative, consider the second derivative,
$$
h''(x) = -\frac{24 x}{\pi^2}H_1(x^2) H_2(x^2)
$$
where
\begin{align}
  \label{eq:h1ink2}
  H_1(t) &= \frac{1}{(3 - t)^{3/2} (3-2t)},\\
H_2(t) &= \sqrt{\frac{3(3-t)}{1-t}} - \frac{3-2t}{2} \arccos \left( \frac{t}{3-2t}\right). \nonumber
\end{align}
It is clear from Eq.~\eqref{eq:h1ink2} that all the coefficients in the power
series of $H_1$ about $0$ are positive.
Therefore, it is sufficient to show that all the coefficients in the power series of $H_2$ are positive.
We calculate $H_2(0) = 3 - 3\pi/4>0$ and $H_2'(0) = (3+\pi)/2 >0$.  Next, we calculate
$$
H_2''(t)  = \frac{\sqrt3}{2} \left( 7 + \frac{3}{1-t} \right) \frac{1}{(3-2t)(1-t)^{3/2}\sqrt{3-t}}.
$$
Hence all coefficients in the power series of $H_2$ are positive, as required.
\end{proof}

\subsection{The 1.82 bit protocol}
\label{ssec:slightlybetter}

In this section we observe that the amount of communication can be
lowered to $1.82$ bits on average by performing the $k=1$ orthant
protocol with probability
$$
  p := \frac{8-2 \pi }{8+\left(\sqrt{6}-2\right) \pi } \approx 0.18
$$
and the $k=2$ orthant protocol the remainder of the time.  This is the largest
value of $p$ for which our protocol works.  One could possibly obtain
a slightly better protocol by directly modifying the $k=2$ orthant protocol,
but the analysis seems to get too complicated.

\begin{lemma}
Let $h(x) = p h_1^\text{ORT}(x) + (1-p) h_2^\text{ORT}(x)$.
Then $h(1)=1$ and $h$ is odd, continuous, and monotonically increasing.
Moreover, it is (real) analytic on $(-1,1)$, and its power series about $x=0$,
$$
h(x) = \sum_{k=0}^\infty {c_{2k+1}} x^{2k+1},
$$
satisfies that $c_1>0$ and $c_{2k+1} \le 0$ for all $k >0$.
\end{lemma}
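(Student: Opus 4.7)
The plan is to reduce everything to statements about the two components $h_1^\text{ORT}$ and $h_2^\text{ORT}$ separately. The elementary claims---$h(1)=1$, oddness, continuity, monotonicity on $[-1,1]$, and real analyticity on $(-1,1)$---all follow by linearity once they are verified for each component; for $h_2^\text{ORT}$ this is the content of the preceding lemma, and for $h_1^\text{ORT}(x)=\frac{4}{\pi}\arcsin(x/\sqrt{2})$ they are immediate from standard properties of $\arcsin$. Positivity of the leading coefficient then follows from
$$c_1 \;=\; p\cdot\frac{2\sqrt{2}}{\pi} + (1-p)\cdot\frac{2\sqrt{3}}{\pi} \;>\;0.$$

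For the sign of the higher odd coefficients I would mimic the second-derivative strategy of the preceding proof. Differentiating twice yields
$$\frac{h''(x)}{x} \;=\; p\,U(x^2) \;-\; (1-p)\,V(x^2),$$
where $U(t) := \frac{4}{\pi(2-t)^{3/2}}$ comes from $(h_1^\text{ORT})''$ by direct computation, and $V(t) := \frac{24}{\pi^2}H_1(t)H_2(t)$ is the function already studied in the preceding lemma, so that $(h_2^\text{ORT})''(x)/x = -V(x^2)$. Expanding $(1-t/2)^{-3/2}$ as a binomial series shows that every coefficient $u_k$ of $U(t) = \sum_k u_k t^k$ is strictly positive, and the preceding lemma has already established the same for the coefficients $v_k$ of $V(t) = \sum_k v_k t^k$. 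The required condition $c_{2k+1}\le 0$ for $k\ge 1$ is therefore equivalent to the coefficient-wise inequality
$$\frac{p}{1-p} \;\le\; \frac{v_k}{u_k}\qquad\text{for every }k\ge 0.$$

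The value of $p$ in the statement is engineered precisely so that the $k=0$ case holds with equality: a direct calculation gives $u_0 = \sqrt{2}/\pi$ and $v_0 = 2(4-\pi)/(\pi^2\sqrt{3})$, hence $v_0/u_0 = 2(4-\pi)/(\pi\sqrt{6})$, and solving $p/(1-p) = v_0/u_0$ reproduces the stated formula for $p$ and forces $c_3 = 0$ (which also explains the remark that this is the largest admissible $p$). The hard part will be showing $v_k/u_k \ge v_0/u_0$ for every $k \ge 1$: unlike the preceding lemma, this cannot be absorbed into a single product-of-positive-series argument, since it compares a ratio of two such series. My plan is to write $u_k$ in closed form from the binomial expansion and $v_k$ as a convolution of the explicit coefficients of $H_1$ and $H_2$ (both available via the formulas recalled in the preceding proof), and then to verify the inequality directly: a finite check handles small $k$, while for large $k$ the estimate $u_k \sim C \cdot 2^{-k}\sqrt{k}$ (since $U$ has radius of convergence $2$), combined with the subexponential decay of $v_k$ (since $V$ has radius of convergence only $1$, inherited from the $(1-t)^{-3/2}$ factor in $H_2''$), forces $v_k/u_k$ to grow without bound and so to exceed $v_0/u_0$ outside a bounded range of $k$.
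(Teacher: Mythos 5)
Your reductions up to the point where the problem becomes the coefficient-wise inequality $pu_k \le (1-p)v_k$ for all $k \ge 0$, and your identification of $p$ from the $k=0$ case, are both correct and match the paper's setup. However, there are two problems with the remainder.

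First, the claim that the inequality ``cannot be absorbed into a single product-of-positive-series argument'' is incorrect, and this is precisely where the paper's proof diverges from your plan. The paper factors out $H_1(t) := (2-t)^{-3/2}$ (which is $\tfrac{\pi}{4}U(t)$) from $-\tfrac{\pi^2}{24}\bigl(pU(t) - (1-p)V(t)\bigr)$ and writes the cofactor as
$$H_2(t) \;=\; -p\,\frac{\pi}{6} \;+\; (1-p)\,\frac{1}{(3-t)^{3/2}}\,H_3(t),$$
where $H_3(t) = (2-t)^{3/2}\bigl[\tfrac{1}{3-2t}\sqrt{\tfrac{3(3-t)}{1-t}} - \tfrac12\arccos(\tfrac{t}{3-2t})\bigr]$. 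The choice of $p$ makes $H_2(0)=0$; since $(3-t)^{-3/2}$ has positive coefficients, it then suffices to show $H_3$ has nonnegative coefficients, which the paper does by the same iterated-differentiation technique used for the $k=2$ lemma (computing $H_3(0)$, $H_3'(0)$, then $H_3''$, then $H_4$, $H_4(0)$, $H_4'$, all nonnegative). This proves the genuinely \emph{stronger} statement that $H_2$ itself has nonnegative coefficients, from which the coefficient-wise inequality you want follows by convolution against the positive coefficients of $H_1$. So a product-of-positive-series argument does apply; it just requires the right rearrangement, not a term-by-term split into the $h_1^{\text{ORT}}$ and $h_2^{\text{ORT}}$ pieces.

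Second, even taken on its own terms, your plan does not constitute a proof. You propose a ``finite check'' for small $k$ and an asymptotic estimate for large $k$, but neither is carried out. The radius-of-convergence argument ($u_k^{1/k}\to 1/2$ while $v_k^{1/k}\to 1$, so $v_k/u_k\to\infty$) is correct in spirit, but $v_k^{1/k}\to 1$ only says the decay is subexponential; it gives no explicit threshold $k_0$ beyond which $v_k/u_k \ge v_0/u_0$, and without such a threshold the finite check has no well-defined endpoint. Moreover, $v_k$ is a convolution of the coefficient sequences of $H_1^{\text{old}}(t) = \tfrac{1}{(3-t)^{3/2}(3-2t)}$ and $H_2^{\text{old}}(t) = \sqrt{\tfrac{3(3-t)}{1-t}} - \tfrac{3-2t}{2}\arccos(\tfrac{t}{3-2t})$, the latter of which has no simple closed form; rigorously lower-bounding $v_k$ for a controllable range of $k$ would itself be a nontrivial task. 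You have correctly located where the difficulty lies, but the step that actually resolves it is missing, and the paper's rearrangement is exactly the device needed to avoid it.
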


\begin{proof}
The first five properties are easy to verify as before.
A short calculation shows that $c_1 > 0$.
Then calculate
$$
  h''(x) = -\frac{24 x}{\pi^2} H_1(x^2) H_2(x^2)
$$
where
\begin{align*}
  H_1(t) &= \frac1{(2-t)^{3/2}}\\
H_2(t) &= -p \frac{\pi}{6} + (1-p)\frac{1}{(3-t)^{3/2}}H_3(t)\\
H_3(t) &= (2-t)^{3/2} \left[\frac{1}{3-2t}\sqrt{\frac{{3(3-t)}}{{1-t}}} - \frac{1}{2} \arccos \left( \frac{t}{3-2t}\right)\right].
\end{align*}
It is clear that the coefficients $c_{2k+1}$ will be nonpositive for all $k > 0$ if $H_1(t)$
and $H_2(t)$ have series expansions with nonnegative coefficients.  This is clear for $H_1(t)$.
For $H_2(t)$, first notice that $H_2(0) = 0$
(this explains our choice of $p$) and
then note that it has a series expansion with nonnegative coefficients if $H_3(t)$ does.
We calculate $H_3(0) = (4-\pi)/\sqrt 2>0$, and differentiate to obtain
$$ H_3'(t) = \frac{{{3 \sqrt{ 2 - t }}}}{4} \left[ \frac{{{2 \sqrt{ 3 - t }\left( 5 - {4 t} \right)} }}{{{\sqrt{
3 } {\left( 3 - {2 t} \right)}^{2} } {\left( 1 - t
\right)}^{{3}/{2}} }}+  \arccos \left( \frac{t}{3 - {2 t}}
\right) \right].
$$
 From this we calculate $H_3'(0) = (20+9\pi)/12\sqrt2 >0$ and then
differentiate again, finding
$$
  H_3''(t) = \frac{3\sqrt{3}}{4\sqrt{2-t}} H_4(t),
$$
where
$$
  H_4(t) = \frac{79 -157 t+85 t^2+11 t^3-20 t^4+4 t^5}{(3-2 t)^3 (1-t)^{5/2} \sqrt{3-t}} -
     \frac{1}{2\sqrt{3}} \arccos \left( \frac{t}{3-2t} \right).
$$
We calculate $H_4(0) = (316 - 27\pi)/108\sqrt{3}>0$, differentiate once more and find that
\begin{align*}
H_4'(t) &=
\frac{1}{54 \sqrt{1-t} (3-t)^{3/2}} \left[
  1297
+ \frac{9399  t} {(3-2 t)}
+ \frac{30696 t^2} {(3-2 t)^2}
+ \frac{66116  t^3} {(3-2 t)^3}
+ \frac{115080 t^4} {(3-2 t)^4} \right.\\
&\hspace{1cm}\left.
+ \frac{59616 t^5} {(3-2 t)^4(1-t)}
+ \frac{8748 t^6} {(3-2 t)^4(1-t)^2}
+ \frac{540 t^7} {(3-2 t)^4(1-t)^3}\right],
\end{align*}
which we have written in a form that makes it clear that $H_4(t)$ has a series expansion with nonnegative coefficients.  Tracing backwards through the proof, we conclude that the coefficients in the series expansion of $h(x)$ about $x = 0$ have the desired property.
\end{proof}

\section{Lower Bounds}\label{sec:applowerbounds}

As mentioned in the introduction, it has recently been shown
by V{\'e}rtesi and Bene~\cite{VertesiB09} that our main theorem
is tight, i.e., no one bit protocol exists for Problem~\ref{prob:simulatingclassical}.
Here we describe some of our own attempts to prove this, which, although
unsuccessful, might shed further light on the complexity of the problem.

Consider the special case when $\vec a$ and $\vec b$ are both uniformly distributed in $S^{n-1}$ but
constrained so that $\ip{\vec a, \vec b} = \pm (1-\eps)$ with $0< \eps
\ll 1$. Given a one bit protocol $\cal P$, define the function
\begin{align*}
  B_n(\epsilon) = 2-\Exp_{\ip{\vec a, \vec b} = 1-\eps} \left[\Exp_{\cal P}[\alpha \beta]\right] + \Exp_{\ip{\vec a, \vec b} = -1+\eps}\left[\Exp_{\cal P}[\alpha \beta]\right],
\end{align*}
where the outer expectations are taken over all vectors $\vec a, \vec
b \in S^{n-1}$ with inner product $\pm (1-\eps)$ and the inner
expectations are taken over any shared randomness used by the protocol
$\cal P$. If $\cal P$ solves Problem~\ref{prob:simulatingclassical}, then we
necessarily have $\Exp_{\cal P}[\alpha \beta] =
\ip{\vec a, \vec b}$, and hence $B_n(\eps) = 2\eps$ for all $n$ and $\eps$.
Our earlier analysis shows that for the orthant protocol (Protocol~\ref{prot:maximal-k-bit})
with $k=1$, as $\eps$ goes to $0$, $B_n(\eps)$ approaches $8\eps/\pi > 2\eps$.
We conjecture that for sufficiently large $n$ and sufficiently small $\eps$,
all one-bit protocols satisfy $B_n(\eps) > 2\eps$ (and therefore do not
solve Problem~\ref{prob:simulatingclassical}). In fact, we conjecture
that the orthant protocol with $k=1$ is optimal for $B_n(\eps)$, i.e.,
we conjecture that for all one-bit protocols, $\lim_{n\to \infty, \eps \to 0} B_n(\eps)/\eps \geq
8/\pi$. 

One approach to prove these conjectures is the following. First, since we are only interested in
minimizing the value of $B_n(\eps)$ and not in obtaining the correct
correlations, we can restrict attention to deterministic protocols.
Any deterministic protocol partitions Alice's sphere $S^{n-1}$ into
four sets $R(\alpha, c)$, depending on which bit $\alpha$ she outputs
and which bit $c$ she sends.  Once we have specified Alice's strategy,
we can assume Bob acts optimally to minimize $B_n(\eps)$.  The
contribution to $B_n(\eps)$ comes from regions near which $R(-1, c)$
meets $R(+1,c)$, since it is in these areas that Bob cannot tell whether Alice outputs $+1$ or $-1$,
and hence cannot correlate his answer perfectly with Alice's.
Therefore, in order to prove the conjectures, one should argue that any
protocol $\cal P$ must have local regions where $R(-1,c)$ meets
$R(+1,c)$ and that the way these regions meet in the $k=1$ orthant
protocol is optimal.  Formalizing this notion would seem to require
topological arguments, perhaps an extension of the Borsuk-Ulam
theorem.

\medskip
In another attempt to shed light on the problem,
we show now how to extend a lower bound of Barrett, Kent, and
Pironio~\cite{BarrettKP06}, who improved on an earlier
result of Pironio~\cite{PhysRevA.68.062102}. Barrett et al.\ showed
that if we examine the transcript of communication between Alice and
Bob of any protocol for Problem~\ref{prob:simulatingclassical}, then
with probability $1$ (over the shared randomness used by the protocol)
the transcript must show some communication. In other words, it cannot
be the case that Alice and Bob sometimes output results using shared
randomness alone. But this leaves open the possibility that, say,
Alice almost always sends the same message to Bob.

Here, we show a lower bound on the (min-)entropy of the communication
transcript.  More specifically, we show an upper bound on the maximum
probability with which a transcript can appear in a protocol for
Problem~\ref{prob:simulatingclassical}.

\begin{proposition}
There exists a distribution on inputs such that in any protocol that
solves Problem~\ref{prob:simulatingclassical} no transcript can appear
with probability greater than $(3-\sqrt{2})/2 \approx 0.79$ when applied to this input distribution.
\end{proposition}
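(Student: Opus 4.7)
My plan is to use the CHSH input distribution, apply a Bell-type bound for each realization of the public randomness, and combine it with the rectangle structure of two-party communication to turn the resulting sum bound into a bound on an individual transcript probability. I would take Alice's input $\vec a_i$ and Bob's input $\vec b_j$ to be chosen independently and uniformly from $\{\vec a_0,\vec a_1\}$ and $\{\vec b_0,\vec b_1\}$, with the four unit vectors in the CHSH-optimal configuration (as in the example following Problem~\ref{prob:simulatingclassical}). Correctness then forces the Tsirelson value $S := C_{00}+C_{01}+C_{10}-C_{11} = 2\sqrt{2}$, where $C_{ij} := \Exp[\alpha\beta\mid\vec a_i,\vec b_j]$.

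\textbf{Per-$r$ Bell bound.} Fix the shared randomness $r$, making the protocol deterministic. For each transcript $t$, the set of inputs producing it is a rectangle $A_{r,t}\times B_{r,t}$; on that rectangle Alice's output is $f_A(i,r,t)$ and Bob's is $f_B(j,r,t)$. Write $u_r^t(i):=\mathbf{1}[\vec a_i\in A_{r,t}]$, $v_r^t(j):=\mathbf{1}[\vec b_j\in B_{r,t}]$, $p_{ij}^t:=\Pr(t\mid\vec a_i,\vec b_j)=\Exp_r[u_r^t(i)v_r^t(j)]$, and $P(t):=\tfrac14\sum_{ij}p_{ij}^t$. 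Decomposing $S$ according to $(r,t)$, the contribution of each $(r,t)$ is a weighted CHSH expression with $\pm 1$ strategies $\sigma^A_i=f_A(i,r,t)$, $\sigma^B_j=f_B(j,r,t)$ and nonnegative weights $u_r^t(i)v_r^t(j)$. A linear maximization over the sign patterns shows that this weighted CHSH is at most $\sum_{ij}u_r^t(i)v_r^t(j)-2\min_{ij}u_r^t(i)v_r^t(j)$. Summing over $t$ (using $\sum_t u_r^t(i)v_r^t(j)=1$, since exactly one rectangle contains each input pair) and then taking expectation over $r$,
\[
 2\sqrt{2}\;=\;S\;\le\;4-2\sum_t\Exp_r\bigl[\min_{ij}u_r^t(i)v_r^t(j)\bigr],
\]
so $\sum_t\Exp_r\bigl[\min_{ij}u_r^t(i)v_r^t(j)\bigr]\le 2-\sqrt{2}$.

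\textbf{Rectangle lower bound.} For any $u,v\in\{0,1\}^{\{0,1\}}$, let $U:=u(0)+u(1)$ and $V:=v(0)+v(1)$. A direct case check over $(U,V)\in\{0,1,2\}^2$ shows $UV-2\cdot\mathbf{1}[U=V=2]\le 2$, equivalently $\sum_{ij}u(i)v(j)-2\min_{ij}u(i)v(j)\le 2$. Applying this pointwise with $u=u_r^t$, $v=v_r^t$ and taking expectation over $r$ yields
\[
 \Exp_r\bigl[\min_{ij}u_r^t(i)v_r^t(j)\bigr]\;\ge\;2P(t)-1.
\]

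\textbf{Conclusion.} Taking $t^\ast$ to be the most likely transcript and combining the two bounds,
\[
 2P(t^\ast)-1\;\le\;\Exp_r\bigl[\min_{ij}u_r^{t^\ast}(i)v_r^{t^\ast}(j)\bigr]\;\le\;\sum_t\Exp_r\bigl[\min_{ij}u_r^t(i)v_r^t(j)\bigr]\;\le\;2-\sqrt{2},
\]
so $P(t^\ast)\le(3-\sqrt{2})/2$, as required. The main technical obstacle is the rectangle inequality above: it is what converts the Bell-type sum bound into a bound on an individual transcript probability, and it is the only step that genuinely uses the combinatorial (rectangular) structure of two-party communication, rather than merely the fact that fixing the public randomness makes the protocol deterministic.
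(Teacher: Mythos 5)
Your proof is correct, and the key steps (the weighted CHSH bound $\sum_{ij}\epsilon_{ij}u(i)v(j)\sigma^A_i\sigma^B_j\le\sum_{ij}u(i)v(j)-2\min_{ij}u(i)v(j)$ and the rectangle inequality $UV-2\cdot\mathbf{1}[U=V=2]\le 2$) check out by direct case analysis; the telescoping over $t$ using $\sum_t u_r^t(i)v_r^t(j)=1$ and the expectation over $r$ are also sound. The paper's proof reaches the same bound $(3-\sqrt2)/2$ by a shorter, more operational route: it fixes the most likely transcript $T$ and builds a \emph{zero-communication} protocol $P'$ in which each party checks whether $T$ is consistent with its own input, plays as in $P$ if so, and outputs a fresh random bit otherwise; then $\Pr[P'\text{ wins}]\ge \Pr[P\text{ wins}]-(1-p)/2$, and the classical CHSH bound $\Pr[P'\text{ wins}]\le 3/4$ finishes it. Your argument is essentially the paper's argument with the black box opened: fixing $r$ and summing over transcripts is exactly what $P'$ does implicitly, the "all four weights equal $1$" case is the event that $P'$ emulates $P$, and the rectangle inequality is doing the bookkeeping that in the paper is hidden in the one-line estimate for $\Pr[P'\text{ wins}]$. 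What your version buys is that the rectangle structure is made fully explicit and the conclusion $P(t)\le(3-\sqrt2)/2$ visibly holds for \emph{every} transcript $t$, not only the most likely one; what the paper's version buys is brevity and a clean reduction to the CHSH inequality applied as a black box. One small quibble with your final remark: the rectangle structure is used not only in the lemma you highlight but already in the factorization $\Pr(t\mid i,j)=\Exp_r[u_r^t(i)v_r^t(j)]$, without which $P(t)$ would not decompose as claimed.
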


\begin{proof}
Let $P$ be a protocol that solves Problem~\ref{prob:simulatingclassical}.
As mentioned in the introduction, $P$ in particular
allows us to solve the following problem with probability
$\frac{1}{2}+\frac{1}{2\sqrt{2}}\approx 0.85$: Alice and Bob
receive bits $a$ and $b$ respectively, and their task is to output
one bit each in such a way that the XOR of the bits they output
is equal to ${\rm AND}(a,b)$. Assume the bits $a$ and $b$ are chosen uniformly
at random, and consider the resulting distribution on transcripts
created by $P$. Consider the most likely transcript $T$
and let $p$ denote the probability with which it occurs. We now
construct a protocol $P'$ with no communication as follows.
Alice checks whether the transcript $T$ is consistent with her input.
If so, she outputs a bit as in $P$; if not, she outputs a random bit.
Bob does the same. Note that with probability $p$, $P'$ behaves
identically to $P$. With probability $1-p$, however, at least one
of the parties detects that the transcript is not consistent
with his or her input and outputs a random bit. By the definition
of the problem, in this case the success probability is $\frac{1}{2}$.
Therefore the overall success probability of $P'$ is at least
$$ \frac{1}{2}+\frac{1}{2\sqrt{2}} - \frac{1-p}{2},$$
which must be at most $\frac{3}{4}$ by the CHSH inequality.
\end{proof}

\subsection*{Acknowledgements}

Part of this work was done while the authors were visiting the Institut Henri Poincar{\'e}
as part of the program ``Quantum information, computation and complexity'',
and we would like to thank the organizers for their efforts.  We thank Aram Harrow for discussions about lower bounds, Falk Unger for assistance with the proof of Lemma~\ref{lemma:1}, and Peter Harremo{\"e}s for discussions about Schoenberg's theorem.

\appendix


\section{A classical reformulation}
\label{apndx:classical}

The equivalence between Problem~\ref{prob:simulating} and
Problem~\ref{prob:simulatingclassical} is due to Tsirelson~\cite{Tsirelson:85b}. Here we sketch
only the easy direction of this equivalence: a solution
to Problem~\ref{prob:simulatingclassical} implies a solution
to Problem~\ref{prob:simulating}.

Let $\rho$ be a state on $\CC^d \otimes \CC^d$, and
let $A$ and $B$ be $d \times d$ Hermitian matrices whose eigenvalues
are in $\{-1,1\}$. The goal is for Alice and Bob to output
bits $\alpha$ and $\beta$ such that
$$
\Exp[\alpha \beta] = \tr \left( A \otimes B \cdot \rho \right).
$$

Let $a_1, \ldots, a_{d^2} \in \CC$ be the $d^2$ entries of the matrix $A \otimes \one_B \sqrt{\rho} $,
and similarly let $b_1, \ldots, b_{d^2} \in \CC$ be those of $\one_A \otimes B \sqrt{\rho}$.
Let $n = 2d^2$ and define the $n$-dimensional real vectors
\begin{align*}
{\vec a} &= (\Ree a_1, \ldots, \Ree a_{d^2}, \Imm a_1, \ldots,  \Imm a_{d^2}),\\
{\vec b} &= (\Ree b_1, \ldots, \Ree b_{d^2}, \Imm b_1, \ldots,  \Imm b_{d^2}).
\end{align*}
Then
$$\ip{\vec a, \vec a}  = \sum_{j=1}^{d^2} |a_j|^2 = \tr (A\otimes \one_B \rho A \otimes \one_B) = \tr (A^2 \otimes \one_B \cdot \rho) = \tr (\rho) = 1$$
and similarly $\ip{\vec b, \vec b} = 1$.
Moreover,
$$\ip{\vec a, \vec b} = \sum_{j=1}^{d^2} a_j \cdot b_j^* = \tr (A\otimes \one_B \cdot \rho \cdot \one_A \otimes B) =
   \tr  \left( { A}\otimes{B}\cdot \rho \right).$$
Hence, Alice and Bob can use $\vec a$ and $\vec b$ as input to
Problem~\ref{prob:simulatingclassical} in order to solve Problem~\ref{prob:simulating}.

\end{document}